\documentclass[10pt]{article}

\usepackage[utf8]{inputenc}
\usepackage[left=2.5cm,right=2.5cm,top=2.5cm,bottom=2.5cm]{geometry}
\usepackage[numbers]{natbib}
\usepackage{amsmath}
\usepackage{amssymb}
\usepackage{amsfonts}
\usepackage{amsthm}
\usepackage{textcomp}
\usepackage{xcolor}
\usepackage{graphicx}
\usepackage{booktabs}
\usepackage{authblk}
\usepackage{hyperref}
\usepackage{caption}
\newtheorem{theorem}{Theorem}
\newtheorem{lemma}{Lemma}
\newtheorem{assumption}{Assumption}
\newtheorem{definition}{Definition}
\newtheorem{corollary}{Corollary}
\newtheorem{remark}{Remark}
\newtheorem{problem}{Problem}
\usepackage{algorithm}
\usepackage{algpseudocode}
\algrenewcommand\algorithmicrequire{\textbf{Input:}}
\algrenewcommand\algorithmicensure{\textbf{Output:}}
\DeclareMathOperator{\trace}{tr}

\date{}

\title{\textbf{Information-Weighted Distributed Kalman Filter for Multi-Target Tracking over Switching Sensor Networks}}

\author[1]{Tirthankar Chakraborty \thanks{Email: tirthankar@aero.iitb.ac.in}}
\author[1]{Arnab Maity}

\affil[1]{\small{Department of Aerospace Engineering, Indian Institute of Technology Bombay, Mumbai 400076, India}}

\begin{document}

\maketitle



\textbf{\textit{Abstract}} --- Target tracking over wide areas using a network of sensor agents is a challenging problem when agents have limited sensing coverage and communication topologies vary over time. Existing distributed filtering approaches commonly assume that each agent, or at least one of its neighbors, can directly observe the target, and further rely on fixed network topologies, global network parameters, or multiple consensus rounds per time step. These requirements limit their applicability in wide-area network deployments. In this paper, we propose an Information-Weighted Distributed Kalman Filter (IWDKF) that relaxes these assumptions through a weighted-consensus mechanism applicable to arbitrary switching topologies satisfying joint connectivity. The proposed approach requires no prior knowledge of global network properties and limits communication overhead to a single consensus exchange per time step, while enabling information fusion even when some agents lack direct target measurements. The weighted-consensus covariance matrix is shown to remain positive definite, and the resulting consensus update is proven to maintain estimate consistency. It is also established that the proposed weighting strategy preserves the asymptotic agreement property of the underlying consensus protocol under joint connectivity. Numerical simulations of a multi-target tracking scenario involving a network of sensor agents demonstrate the improved estimation accuracy of the proposed IWDKF under both fixed and switching network topologies. The performance of the proposed filter is close to that of the centralized Kalman filter, and is compared with three existing distributed filters.\\

\textbf{\textit{Keywords}} --- Target tracking, sensor networks, distributed state estimation, information fusion, switching communication topologies, consensus protocols



\section{Introduction} \label{sec:intro}

Distributed sensor networks consist of geographically separated yet interconnected sensor agents that collaborate to sense and monitor their environment. They provide improved robustness and fault tolerance compared to isolated agents and find applications in both civilian and military domains, including cooperative target tracking \cite{hu2021distributed}, structural health monitoring \cite{li2022unmanned}, disaster response \cite{hu2022fault}, and aerial reconnaissance \cite{di2015potential}. State estimation is a fundamental component of many of these applications, enabling agents to infer the state of dynamic targets or vehicles from noisy sensor measurements. More than six decades after its introduction, the Kalman filter \cite{kalman1960new} has remained a standard solution for state estimation in noisy environments \cite{anderson2005optimal}. A \textit{centralized} approach, in which all agents send their data to a central fusion center, becomes impractical as network size increases due to bandwidth and computational constraints and introduces a single point of failure. By contrast, a distributed approach \cite{jin2022distributed, pedroso2022discrete} shares the computational load across all agents, thereby improving resilience and removing dependence on a central node.


\subsection{Related Literature}

A decentralized Kalman filter that requires all-to-all communication was proposed in \cite{rao1993fully}, which improves robustness but not scalability. This decentralized architecture has also been extended to networks with node-specific states, in which covariance intersection (CI) is used to account for unknown correlations among local estimators \cite{saini2019decentralized, saini2022adaptive}. Subsequently, consensus-based distributed estimation methods that rely solely on local inter-neighbor communication were developed, enabling such methods to scale effectively to large networks. The first scalable distributed Kalman filter was introduced in \cite{olfati2005distributed}, using embedded \textit{average consensus} filters, though it was later shown to exhibit suboptimal performance \cite{olfati2007distributed}.

The Kalman consensus filter (KCF) \cite{olfati2009kalman} is a widely recognized distributed filtering approach that augments the update step with an average consensus term. However, its performance degrades when some sensors are \textit{oblivious} (i.e., lack target measurements), a common issue in wide-area networks with limited sensing ranges. This issue was addressed in \cite{kamal2011generalized} using a generalized Kalman consensus filter (GKCF) that employs an embedded weighted average consensus algorithm. This was further examined in \cite{kamal2013information}, where the information-weighted consensus filter (ICF) was proposed and was shown to converge asymptotically to the centralized solution, provided that the agents perform infinitely many consensus iterations with their neighbors prior to each local update. A related information-driven filter was presented in \cite{wang2014information} that requires neither global network parameters nor a separate consensus loop. However, it defines the weighted-consensus term as a uniform average over the inclusive neighborhood and does not provide a proof of convergence. Rather than fusing neighboring estimates, the approach in \cite{di2015distributed} selects the most accurate local estimate and propagates it across the network, thereby excluding the estimates of poorly informed agents.

A dynamic active-average consensus algorithm was introduced in \cite{peterson2019active}, distinguishing \textit{active} agents (with target measurements) from \textit{passive} agents (without target measurements), with the objective of driving all agents to converge to the average of the active agents' measurements \cite{chung2020dynamic}. To overcome the loss of optimality caused by oblivious agents, the approximate distributed Kalman filter (ADKF) was proposed in \cite{battilotti2020asymptotically} for continuous-time systems, demonstrating asymptotic convergence to the centralized solution. For discrete-time systems, a near-optimal solution that requires a sufficiently large number of consensus steps was presented in \cite{battilotti2021stability}. Additional consensus-based filters were proposed in \cite{millan2012distributed,matei2012consensus,battistelli2014consensus}.

In contrast to consensus-based approaches that emphasize agreement among agents, diffusion-based filters directly fuse local and neighboring information to improve individual estimates. A diffusion Kalman filter (DKF) was first introduced in \cite{cattivelli2010diffusion} under the assumption that each agent, or at least one of its neighbors, can sense the target. A DKF based on CI (CI-DKF) was proposed in \cite{hu2011diffusion}, which relaxes this assumption. Subsequently, a distributed hybrid information fusion (DHIF) algorithm was developed in \cite{wang2017convergence}, which achieves consistent estimation under mild conditions without requiring global parameters. An alternative approach was proposed in \cite{battistelli2014kullback}, which focuses on reaching consensus on the probability density functions associated with local estimates. The stability conditions for this approach were subsequently analyzed in \cite{battistelli2014parallel} under a fixed network topology and a linear Gaussian system model.


\subsection{Motivation}

Most existing approaches in this domain implicitly or explicitly assume that the target can be sensed by each sensor agent or at least one of its neighbors \cite{olfati2009kalman, cattivelli2010diffusion}. In wide-area sensor networks, it is common for a target to lie outside the sensing range of a cluster of agents, leaving those agents without direct target measurements. This limitation is further compounded for agents with limited fields of view (FOV), such as unmanned aerial vehicles (UAVs) carrying gimbaled cameras. Hence, this assumption does not hold in general for sensor networks. Another assumption is that each agent requires knowledge of certain global network parameters, such as the maximum degree of the network graph or the number of agents in the network \cite{kamal2011generalized, kamal2013information}. In practice, however, these parameters may change dynamically as agents join the network or fail and become inactive.

Existing methods also assume that the communication network topology is time-invariant \cite{olfati2007distributed, olfati2009kalman, kamal2011generalized, kamal2013information}. In practice, switching communication topologies can arise for a variety of reasons. These include intentional design choices, such as scheduled communication protocols, and unintentional factors, such as communication link failures and variations in the relative positions of mobile agents. Many approaches additionally require multiple consensus iterations prior to each local update \cite{olfati2009kalman, kamal2011generalized, kamal2013information}. This necessitates multiple rounds of communication among neighboring agents before each local update, which can significantly increase communication overhead in energy-constrained systems such as battery-powered UAVs.

Collectively, these assumptions limit the applicability and robustness of existing distributed filtering approaches in practical wide-area sensor network deployments.


\subsection{Contribution}

To address the aforementioned limitations, this paper develops a distributed filtering framework for target tracking using a network of agents with limited sensing capabilities and switching communication topologies. The proposed framework neither requires knowledge of global network parameters nor does it require that each agent or any of its neighbors detect the target. Furthermore, each agent communicates with its neighbors only once per time step. The main contributions of this work are summarized as follows:
\begin{enumerate}
    \item Development of a new Information-Weighted Distributed Kalman Filter (IWDKF) for multi-target tracking.
    
    \item Proof of positive definiteness of the covariance matrix associated with the weighted-consensus estimate.
    
    \item Convergence analysis establishing that the proposed weighting strategy achieves consensus at a geometric rate under jointly connected switching graphs.
    
    \item Proof that the weighted-consensus update preserves the consistency of locally consistent estimates.
    
    \item Performance evaluation of the proposed IWDKF, in a multi-target tracking scenario with a network of agents, through comparisons with the centralized Kalman filter and three existing distributed filters, namely, KCF, GKCF, and ICF, using numerical simulations.
\end{enumerate}

A preliminary version of this work, restricted to single-target tracking under topologies connected at every time instant and without convergence guarantees, was presented in our conference paper \cite{chakraborty2026dynamic}. The present paper generalizes that formulation to multi-target tracking under the strictly weaker assumption of jointly connected switching topologies, and additionally establishes the convergence and consistency guarantees not addressed in the conference version. The numerical evaluation is also extended to validate these theoretical contributions. Thus, the present paper constitutes substantial additional work compared to our conference paper.


The remainder of this paper is organized as follows. Section~\ref{sec:prelim} presents the necessary mathematical preliminaries. Section~\ref{sec:problem} formulates the problem. Section~\ref{sec:background} presents background on centralized and distributed estimation frameworks. Section~\ref{sec:proposed} introduces the proposed distributed filter. Section~\ref{sec:convergence} establishes the convergence and consistency properties of the proposed weighted-consensus mechanism. Section~\ref{sec:sim} demonstrates the effectiveness of the proposed method in a multi-target tracking scenario through numerical simulations. Finally, Section~\ref{sec:conclusion} concludes the paper.


\section{Preliminaries} \label{sec:prelim}

This section introduces the notation, graph-theoretic concepts, and the average consensus algorithm used throughout the remainder of the paper.


\subsection{Notation}

Vectors are denoted by bold lowercase letters and matrices by bold uppercase letters. $\mathbf{0}$ and $\mathbf{1}$ represent column vectors of all zeros and all ones, respectively, with appropriate dimensions. The identity matrix of order $m$ is denoted by $\mathbf{I}_m$. The Kronecker product is represented by the symbol ``$\otimes$''. The transpose and expectation operators are written as $(\cdot)^\top$ and $\mathbb{E}[\cdot]$, respectively. A Gaussian distribution with mean $\boldsymbol{\mu}$ and covariance $\boldsymbol{\Sigma}$ is denoted by $\mathcal{N}\big(\boldsymbol{\mu},\boldsymbol{\Sigma}\big)$. The notation $a \in \{ 1,\ldots,b\}$ is abbreviated as $\{a\}_1^b$. The sets of real numbers and positive integers are denoted by $\mathbb{R}$ and $\mathbb{Z}^+$, respectively. The sets of $n \times n$ symmetric positive semidefinite and positive definite matrices are represented by $\mathbb{S}^n_{+}$ and $\mathbb{S}^n_{++}$, respectively. $\mathbf{E}_1 \prec \mathbf{E}_2$ implies that $(\mathbf{E}_2 - \mathbf{E}_1) \in \mathbb{S}^n_{++}$, while $\mathbf{E}_1 \preceq \mathbf{E}_2$ signifies that $(\mathbf{E}_2 - \mathbf{E}_1) \in \mathbb{S}^n_{+}$, for any two symmetric matrices $\mathbf{E}_1, \mathbf{E}_2 \in \mathbb{S}^n$. Also, $\mathbf{E}_1 < \mathbf{E}_2$ implies that $(\mathbf{E}_2 - \mathbf{E}_1)$ is a positive matrix (entrywise positive), while $\mathbf{E}_1 \leq \mathbf{E}_2$ signifies that $(\mathbf{E}_2 - \mathbf{E}_1)$ is a nonnegative matrix (entrywise nonnegative). The trace of $\mathbf{E}$ is given by $\trace(\mathbf{E})$. The vectorization operator is denoted by $\operatorname{vec}(\cdot)$, which maps $\mathbf{E} \in \mathbb{R}^{m \times n}$ to a vector in $\mathbb{R}^{mn}$ formed by stacking its columns. The oscillation of a vector $\mathbf{d}$ is defined as $\operatorname{osc}(\mathbf{d}) \triangleq \max_i d_i-\min_i d_i$. The largest integer less than or equal to a real number $n$ is written as $\lfloor n \rfloor$.


\subsection{Graph Theory}

The communication topology of a wide-area sensor network with $N$ agents is modeled by the undirected graph $\mathcal{G}(k) = (\mathcal{V},\mathcal{E}(k))$, where $\mathcal{V} = \{1,2,\ldots,N\}$ denotes the set of sensor agents and $\mathcal{E}(k)\subseteq \mathcal{V}\times\mathcal{V}$ represents the set of communication links. The existence of an edge $(i,j)$ in $\mathcal{G}(k)$ indicates that agents $i$ and $j$ can exchange information. Since the graph is undirected, the edge relation is symmetric: $(i,j) \in \mathcal{E}(k) \Leftrightarrow (j,i) \in \mathcal{E}(k)$. The degree of a vertex in $\mathcal{G}(k)$ is the number of edges incident to that vertex, and $\Delta(k)$ denotes the maximum degree of $\mathcal{G}(k)$.

Two distinct agents $i$ and $j$ are \textit{neighbors} if $(i,j)\in \mathcal{E}(k)$. The neighborhood of agent $i$ is defined as $\mathcal{N}_i(k) \triangleq \{ j \in \mathcal{V} \mid (i,j) \in \mathcal{E}(k), j \neq i \}$, and the \textit{inclusive neighborhood} of agent $i$ is defined as $\mathcal{J}_i(k) \triangleq \mathcal{N}_i(k) \cup \{i\}$. A \textit{path} in a graph is a sequence of distinct edges connecting a sequence of distinct vertices. A graph is \textit{connected} if every pair of vertices is joined by at least one path. For a sequence of switching graphs $\{\mathcal{G}(k)\}$, the \textit{union graph} over the interval $[k_1,k_2]$ is defined as $\bigcup_{k=k_1}^{k_2}\mathcal{G}(k) = \left(\mathcal{V}, \bigcup_{k=k_1}^{k_2}\mathcal{E}(k)\right)$. The graph sequence is said to be \textit{jointly connected} over the interval $[k_1,k_2]$, if its union graph is connected. 

A matrix is said to be \textit{row-stochastic} if it has nonnegative entries and each row sums to unity. The adjacency matrix associated with $\mathcal{G}(k)$ is denoted by $\mathbf{A}(k)=[a_{ij}(k)]$, where $a_{ij}(k)=1$ if $(i,j)\in\mathcal{E}(k)$, and $a_{ij}(k)=0$ otherwise.


\subsection{Average Consensus}

Consider a network of $N$ agents indexed by $\{i\}_1^N$, where each agent holds an initial value $x_i(0) \in \mathbb{R}$. The goal of the average consensus problem is to drive all agents to asymptotically agree on the average of their initial values, represented by
\begin{align}
    \overline{x} = \frac{1}{N} \sum_{i=1}^N x_i(0),
\end{align}
using only local communication between neighbors \cite{olfati2007consensus}. The discrete-time average consensus protocol is given by
\begin{align}
    x_i(k+1) = x_i(k) + \epsilon \sum_{j \in \mathcal{N}_i} \left( x_j(k) - x_i(k) \right),
\end{align}
where $\epsilon$ is the consensus step size. In compact form, the update can be written as
\begin{align}
    \mathbf{x}(k+1) = \left(\mathbf{I}_N - \epsilon \mathbf{L}\right) \, \mathbf{x}(k),
\end{align}
where $\mathbf{L}$ denotes the graph Laplacian and $\mathbf{x}(k) \in \mathbb{R}^N$. For an undirected, connected graph with step size $0 < \epsilon < 1/\Delta$, where $\Delta$ represents the maximum degree of $\mathcal{G}$, the state of every agent converges to the average given by
\begin{align*}
    \lim_{k \to \infty} x_i(k) = \overline{x}, \ \forall i.
\end{align*}

In general, a larger step size $\epsilon$ (within the bound $1/\Delta$) tends to accelerate convergence, while a step size exceeding $1/\Delta$ leads to divergence. The average consensus algorithm can be extended to vector- or matrix-valued data by applying the algorithm independently to each entry. When $x_i(0) \in \mathbb{R}^m$, the compact form of the average consensus protocol is represented as
\begin{align}
    \mathbf{x}(k+1) = \left((\mathbf{I}_N - \epsilon \mathbf{L}) \otimes \mathbf{I}_m \right) \, \mathbf{x}(k),
\end{align}
where $\mathbf{x}(k) = \left[x_1^\top(k) \ \cdots \ x_N^\top(k) \right]^\top \in \mathbb{R}^{Nm}$ is the stacked state vector.


\section{Problem Formulation} \label{sec:problem}

This section formulates the target dynamics and sensor measurement model, states the assumptions and definitions required for the subsequent analysis, and specifies the objectives of the proposed distributed estimation framework.


\subsection{System Description}

Consider an area containing $N_T$ targets, monitored by a sensor network $\mathcal{G}(k)$ comprising $N$ agents. The dynamics of target $t$ are modeled by a discrete-time linear time-invariant (LTI) system as
\begin{align}\label{sys1}
    \mathbf{x}^t(k+1) &= \mathbf{F}\mathbf{x}^t(k) + \mathbf{G}\mathbf{w}^t(k),
\end{align}
where $\mathbf{x}^t(k) \in \mathbb{R}^{n}$ is the state vector of target $t$, $\mathbf{F} \in \mathbb{R}^{n\times n}$ is the state transition matrix, $\mathbf{w}^t(k) \in \mathbb{R}^{n}$ is the white process noise distributed as $\mathcal{N}(\mathbf{0},\mathbf{Q})$, and $\mathbf{G} \in \mathbb{R}^{n\times n}$ is the process noise input matrix. The sensor measurement model is described as follows
\begin{align}\label{meas1}
    \mathbf{z}_i^t(k) &= \mathbf{H}^t_i(k) \, \mathbf{x}^t(k) + \mathbf{v}_i^t(k),
\end{align}
where $\mathbf{z}_i^t(k) \in \mathbb{R}^{m}$ is the local measurement vector of target $t$ obtained by agent $i$, $\mathbf{H}^t_i(k) \in \mathbb{R}^{m\times n}$ is the local measurement matrix, and $\mathbf{v}_i^t(k) \in \mathbb{R}^{m}$ is the white measurement noise distributed as $\mathcal{N}(\mathbf{0},\mathbf{R}_i^t)$. The total number of measurements obtained by agent $i$ at time $k$ is denoted by $L_i(k)$, and these measurements are represented by $\{ \mathbf{z}_{i,r}(k) \}_{r=1}^{L_i(k)}$. Once the $r$-th measurement $\mathbf{z}_{i,r}(k)$ has been associated with target $t$, it is denoted by $\mathbf{z}_i^t(k)$. The number of associated measurements is represented by $N_i^a(k)$, where $N_i^a(k) \leq L_i(k)$.

Given the set of sensor measurements for target $t$ up to time $k$, denoted by $\mathcal{Z}_i^t(k) = \{\mathbf{z}_i^t(0), \mathbf{z}_i^t(1), \ldots, \mathbf{z}_i^t(k)\}$, the state estimates of target $t$ at agent $i$ are defined as
\begin{align*}
    \bar{\mathbf{x}}_i^t(k) &= \mathbb{E}\big[\mathbf{x}^t(k) | \mathcal{Z}_i^t(k-1)\big],\\
    \hat{\mathbf{x}}_i^t(k) &= \mathbb{E}\big[\mathbf{x}^t(k) | \mathcal{Z}_i^t(k)\big],
\end{align*}
where $\bar{\mathbf{x}}_i^t(k)$ and $\hat{\mathbf{x}}_i^t(k)$ are called the \textit{prior estimate} (or prediction) and the \textit{posterior estimate} (or updated estimate) of the state $\mathbf{x}^t(k)$ of target $t$, computed at agent $i$, respectively. The prior and posterior estimation errors are respectively defined as $\bar{\boldsymbol{\eta}}_i^t(k) = \bar{\mathbf{x}}_i^t(k) -\mathbf{x}^t(k)$ and $\hat{\boldsymbol{\eta}}_i^t(k) = \hat{\mathbf{x}}_i^t(k) -\mathbf{x}^t(k)$. The prior and posterior error covariance matrices are defined as
\begin{align*}
    \mathbf{P}_i^t(k) &= \mathbb{E}\left[ \bar{\boldsymbol{\eta}}_i^t(k) \,\big(\bar{\boldsymbol{\eta}}_i^t(k)\big)^\top \right],\\
    \mathbf{M}_i^t(k) &= \mathbb{E}\left[ \hat{\boldsymbol{\eta}}_i^t(k) \, \big(\hat{\boldsymbol{\eta}}_i^t(k)\big)^\top \right].
\end{align*}

An \textit{information matrix} is represented as the inverse of a covariance matrix, while an \textit{information vector} is described as the product of the information matrix and the state estimate vector.


\subsection{Assumptions and Definitions}

\begin{assumption}\label{ass:independent}
    The sequences $\{\mathbf{v}_i^t(k)\}$ are mutually independent for all agents $i$, and are independent of $\{\mathbf{w}^t(k)\}$ and the initial state $\mathbf{x}^t(0)$.
\end{assumption}

\begin{assumption}\label{ass:knowledge}
     $\mathbf{F}$, $\mathbf{G}$, $\mathbf{Q}$, $\mathbf{H}_i^t(k)$, $\mathbf{R}_i^t$, $N_T$, $\mathbf{P}_i^t(0)$ and $\bar{\mathbf{x}}_i^t(0)$ $\forall \{t\}_1^{N_T}$ for each agent $i$ are known.
\end{assumption}

\begin{assumption}\label{ass:constants}
    There exist constants $0 < \underline{p} \leq \overline{p} < \infty$ such that
    \begin{align*}
        \underline{p}\,\mathbf{I}_n \ \preceq \ \mathbf{P}_i^t(k) \ \preceq \ \overline{p}\,\mathbf{I}_n, \quad \forall i,\ \forall t,\ \forall k \geq 0.
    \end{align*}
\end{assumption}

\begin{assumption}\label{ass:observable}
    The pair $\left(\mathbf{F}, \mathbf{H}^t(k)\right)$ is observable, where $\mathbf{H}^t(k) \triangleq \left[\mathbf{H}^{t\top}_1(k) \ \mathbf{H}^{t\top}_2(k) \ \cdots \ \mathbf{H}^{t\top}_N(k) \right]^\top$. No assumption is imposed on the observability of the individual pairs $\left(\mathbf{F}, \mathbf{H}^t_i(k)\right)$.
\end{assumption}

\begin{definition}[Oblivious agent] \label{def:oblivious}
    Let $\mathcal{D}^t(k) \subseteq \mathcal{V}$ denote the set of agents directly observing target $t$ at time $k$. Then, an agent $i$ is called \textit{oblivious} to target $t$, if it belongs to the set
    \begin{align*}
    \mathcal{O}^t(k) = \{i \in \mathcal{V} \ | \ \mathcal{J}_i(k) \cap \mathcal{D}^t(k) = \varnothing  \}.
    \end{align*}
\end{definition}

Definition~\ref{def:oblivious} implies that for an agent to be \textit{oblivious}, neither the agent nor any of its neighbors senses the target. It is worth noting that alternative terms appear in the literature for such agents, e.g., \textit{naive node} in \cite{kamal2011generalized} and \textit{blind agent} in \cite{wang2017convergence}.

\begin{assumption}\label{ass:target}
    For any agent $i$ that does not directly observe target $t$, i.e., $ i \in \mathcal{V} \setminus \mathcal{D}^t(k)$, $\mathbf{H}^t_i(k) = \mathbf{0}$ and $(\mathbf{R}_i^t)^{-1} = \mathbf{0}$.
\end{assumption}

\begin{definition}[Consistency \cite{julier1997non}] \label{def:consistency}
    Let $\hat{\mathbf{x}}$ be an estimate of a random variable $\mathbf{x}$ with an associated reported error covariance $\mathbf{P}$. If $\acute{\mathbf{P}} \triangleq \mathbb{E}\big[(\mathbf{x} - \hat{\mathbf{x}})(\mathbf{x} - \hat{\mathbf{x}})^\top\big]$ is the true error covariance of $\hat{\mathbf{x}}$, then the pair $(\hat{\mathbf{x}}, \mathbf{P})$ is called \textit{consistent} if $\mathbf{P} \succeq \acute{\mathbf{P}}$. 
\end{definition}

Definition~\ref{def:consistency} states that a consistent estimate must have a reported covariance that is no less than the true error covariance, i.e., the uncertainty must not be underestimated.

\begin{assumption}\label{ass:jc}
    There exists $B \in \mathbb{Z}^+$ such that, for every $k \geq 0$, the graph sequence $\{\mathcal{G}(t)\}_{t=k}^{k+B-1}$ is jointly connected over the interval $[k, k+B-1]$.
\end{assumption}


\subsection{Problem Statement}

Consider $N_T$ targets with dynamics given by Eq.~\eqref{sys1}, observed by $N$ agents through Eq.~\eqref{meas1} and communicating over the undirected switching graph $\mathcal{G}(k)=(\mathcal{V},\mathcal{E}(k))$. As illustrated in Fig.~\ref{fig:oblivious_agents}, the set of agents observing a given target may vary with time, so an agent may be oblivious to target $t$ at some time instants.

\begin{problem}
Design a distributed Kalman filter that enables every agent to recursively compute a local estimate of each target state using only (i) its local measurements and (ii) information received from its neighbors. The designed filter must satisfy the following requirements: (I) no agent requires knowledge of any global network parameter, (II) the filter remains applicable under switching communication topologies, and (III) each agent communicates with its neighbors at most once per time step.
\end{problem}

\begin{figure}
    \centering
    \includegraphics[width=0.55\textwidth]{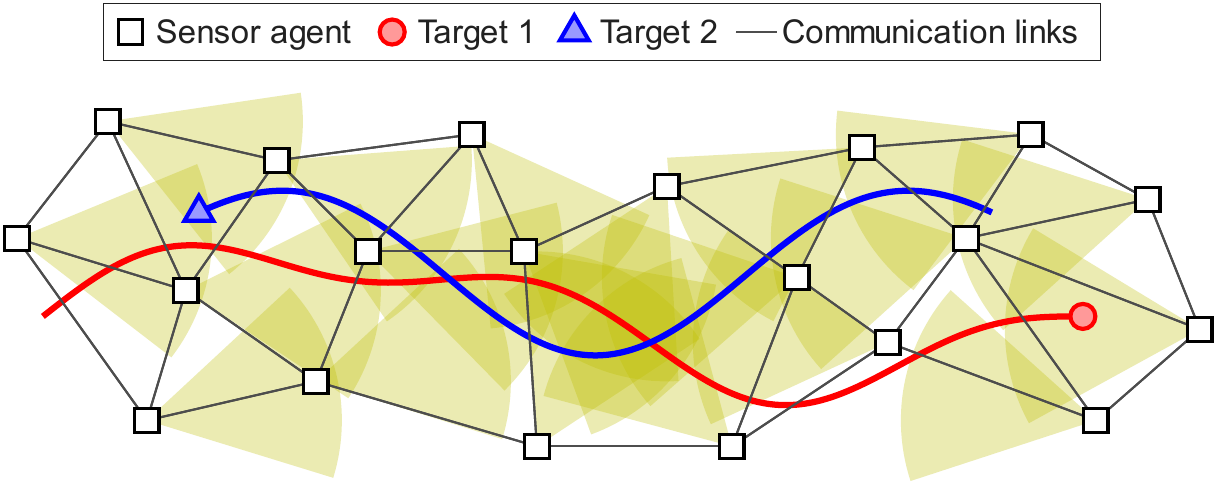}
    \caption{An illustration of a network of agents, with a limited FOV and constrained communication bandwidth, tracking two targets.}
    \label{fig:oblivious_agents}
\end{figure}

To reduce notational clutter while preserving agent and target indices, the time index $k$ is omitted from all equations throughout the remainder of this paper, unless otherwise stated. Furthermore, the updated value of any variable $\mathbf{x}$ is denoted by $\mathbf{x}^+$. For example, $\mathbf{x}(k+1) = f\left( \mathbf{x}(k), \mathbf{w}(k) \right)$ is expressed as $\mathbf{x}^+ = f(\mathbf{x}, \mathbf{w})$.


\section{Background} \label{sec:background}

This section reviews the centralized Kalman filter as a performance benchmark and presents the corresponding distributed estimation framework upon which the proposed algorithm is built.


\subsection{Centralized Estimation}

The posterior state estimate for target $t$ at sensor agent $i$ is computed using the Kalman filter \cite{kalman1960new} as
\begin{align}
    \hat{\mathbf{x}}_i^t &= \bar{\mathbf{x}}_i^t + \mathbf{K}_i^t\left( \mathbf{z}_i^t - \mathbf{H}_i^t\bar{\mathbf{x}}_i^t \right),\\
    \mathbf{K}_i^t &= \mathbf{P}_i^t \left(\mathbf{H}_i^t \right)^\top \left( \mathbf{H}_i^t \mathbf{P}_i^t \left(\mathbf{H}_i^t\right)^\top + \mathbf{R}_i^t \right)^{-1},
\end{align}
where $\mathbf{K}_i^t$ is the Kalman gain. To simplify notation, the target index $t$ is omitted from the remaining equations in this section, unless otherwise stated.

In a centralized Kalman filter (CKF), all local sensor measurements are collected at a fusion center, stacked into a single measurement vector, and jointly processed using the standard Kalman filter recursion to produce an optimal global state estimate under linear Gaussian assumptions. The centralized measurement vector, measurement matrix, and measurement noise covariance matrix are defined as follows, respectively,
\begin{align}
    \mathbf{z}_c &= \begin{bmatrix}
        \mathbf{z}_1^\top & \mathbf{z}_2^\top & \cdots & \mathbf{z}_N^\top
    \end{bmatrix}^\top \in \mathbb{R}^{Nm},\\
    \mathbf{H}_c &= \begin{bmatrix}
        \mathbf{H}_1^\top & \mathbf{H}_2^\top & \cdots & \mathbf{H}_N^\top
    \end{bmatrix}^\top \in \mathbb{R}^{Nm \times n},\\
    \mathbf{R}_c &= \text{diag}\big( \mathbf{R}_1, \mathbf{R}_2, \ldots, \mathbf{R}_N \big) \in \mathbb{R}^{Nm \times Nm},
\end{align}
The centralized posterior estimate is given by $\hat{\mathbf{x}}_c = \bar{\mathbf{x}}_c + \mathbf{K}_c \big(\mathbf{z}_c - \mathbf{H}_c\bar{\mathbf{x}}_c \big)$ with $\mathbf{K}_c = \mathbf{P}_c \mathbf{H}_c^\top \left(\mathbf{H}_c\mathbf{P}_c\mathbf{H}_c^\top + \mathbf{R}_c \right)^{-1}$, where $\bar{\mathbf{x}}_c$ and $\mathbf{P}_c$ are the centralized prior estimate and the centralized prior error covariance matrix, respectively. Applying the \textit{matrix inversion lemma} \cite{woodbury1950inverting}, the centralized posterior covariance and state estimate can be expressed in the \textit{information form} \cite{mutambara2019decentralized} as
\begin{align}
    \mathbf{M}_c &= \left( \mathbf{P}_c^{-1} + \mathbf{H}_c^\top \mathbf{R}_c^{-1} \mathbf{H}_c \right)^{-1} = \left( \mathbf{P}_c^{-1} + \sum_{i=1}^N \mathbf{H}_i^\top \mathbf{R}_i^{-1} \mathbf{H}_i \right)^{-1},\\
    \hat{\mathbf{x}}_c &= \mathbf{M}_c \left( \mathbf{P}_c^{-1}\bar{\mathbf{x}}_c + \mathbf{H}_c^\top \mathbf{R}_c^{-1} \mathbf{z}_c \right) = \mathbf{M}_c \left( \mathbf{P}_c^{-1}\bar{\mathbf{x}}_c + \sum_{i=1}^N \mathbf{H}_i^\top \mathbf{R}_i^{-1} \mathbf{z}_i \right).
\end{align}

\begin{remark}
    Computation of $\hat{\mathbf{x}}_c$ requires information to be transmitted from each agent in the network to a common fusion center. Due to communication and computational constraints, this method may become impractical as the network size increases.
\end{remark}


\subsection{Distributed Estimation}

In contrast to the centralized approach, distributed filtering algorithms assume that each agent has access to the prior estimates, prior covariances, measurements, measurement matrices, and measurement noise covariances of only its neighbors at each time step.

Neglecting neighbor data, the local Kalman filter computes the posterior covariance and state estimate for agent $i$, respectively \cite{olfati2007distributed}, as
\begin{align}
    \mathbf{M}_i &= \left(\mathbf{P}_i^{-1} + \mathbf{H}_i^\top \mathbf{R}_i^{-1} \mathbf{H}_i\right)^{-1}, \label{eq1} \\
    \hat{\mathbf{x}}_i &= \mathbf{M}_i \left(\mathbf{P}_i^{-1} \bar{\mathbf{x}}_i + \mathbf{H}_i^\top \mathbf{R}_i^{-1} \mathbf{z}_i\right). \label{eq2} 
\end{align}

Since $\bar{\mathbf{x}}_i = \mathbf{x} + \bar{\boldsymbol{\eta}}_i$, a merged state and sensor model can be written as $\breve{\mathbf{z}}_i = \breve{\mathbf{H}}_i \mathbf{x} + \breve{\mathbf{v}}_i$, where 
\begin{align*}
\breve{\mathbf{z}}_i \triangleq \big[\bar{\mathbf{x}}_i^\top \ \mathbf{z}_i^\top \big]^\top, \ \ \ \breve{\mathbf{H}}_i \triangleq \big[ \mathbf{I}_n^\top \ \mathbf{H}_i^\top \big]^\top, \ \ \ \breve{\mathbf{v}}_i \triangleq \big[\bar{\boldsymbol{\eta}}_i^\top \ \mathbf{v}_i^\top \big]^\top. 
\end{align*}
The noise covariance matrix of this model is defined as $\breve{\mathbf{R}}_i \triangleq \text{diag}\big(\mathbf{P}_i, \mathbf{R}_i \big)$. Substituting the above definitions into Eqs.~\eqref{eq1} and \eqref{eq2}, the local posterior covariance and state estimate can be written compactly as
\begin{align}
    \mathbf{M}_i = \left( \breve{\mathbf{H}}_i^\top  \breve{\mathbf{R}}_i^{-1} \breve{\mathbf{H}}_i \right)^{-1}, \quad \hat{\mathbf{x}}_i = \mathbf{M}_i \left( \breve{\mathbf{H}}_i^\top  \breve{\mathbf{R}}_i^{-1} \breve{\mathbf{z}}_i \right). \label{eq3}
\end{align}

Incorporating the available neighbor data, these merged vectors and matrices can be redefined as
\begin{align*}
    \breve{\mathbf{z}}_i &\triangleq \left[
        \bar{\mathbf{x}}_i^\top \ \bar{\mathbf{x}}_{j_1}^\top \ \cdots \ \bar{\mathbf{x}}_{j_{|\mathcal{N}_i|}}^\top \ \mathbf{z}_i^\top \ \mathbf{z}_{j_1}^\top \ \cdots \ \mathbf{z}_{j_{|\mathcal{N}_i|}}^\top \right]^\top,\\
    \breve{\mathbf{H}}_i &\triangleq \left[
        \mathbf{I}_n^\top \ \mathbf{I}_n^\top \ \cdots \ \mathbf{I}_n^\top \ \mathbf{H}_i^\top \ \mathbf{H}_{j_1}^\top \ \cdots \ \mathbf{H}_{j_{|\mathcal{N}_i|}}^\top
    \right]^\top,\\
    \breve{\mathbf{v}}_i &\triangleq \left[
        \bar{\boldsymbol{\eta}}_i^\top \ \bar{\boldsymbol{\eta}}_{j_1}^\top \ \cdots \ \bar{\boldsymbol{\eta}}_{j_{|\mathcal{N}_i|}}^\top \ \mathbf{v}_i^\top \ \mathbf{v}_{j_1}^\top \ \cdots \ \mathbf{v}_{j_{|\mathcal{N}_i|}}^\top
    \right]^\top.
\end{align*}

The merged noise covariance matrix is now defined as $\breve{\mathbf{R}}_i \triangleq \text{diag}\left(\boldsymbol{\mathsf{P}}_i, \mathbf{R}_i, \mathbf{R}_{j_1}, \ldots, \mathbf{R}_{j_{|\mathcal{N}_i|}}\right)$. Using $\mathbf{P}_{ij} = \mathbb{E}\left[\bar{\boldsymbol{\eta}}_i \bar{\boldsymbol{\eta}}_j^\top\right]$, $\boldsymbol{\mathsf{P}}_i \in \mathbb{R}^{|\mathcal{J}_i|n \times |\mathcal{J}_i|n}$ is given by
\begin{align*}
    \boldsymbol{\mathsf{P}}_i \triangleq \begin{bmatrix}
        \mathbf{P}_i & \mathbf{P}_{ij_1} & \cdots & \mathbf{P}_{ij_{|\mathcal{N}_i|}}\\
        \mathbf{P}_{j_1i} & \mathbf{P}_{j_1} & \cdots & \mathbf{P}_{j_1j_{|\mathcal{N}_i|}}\\
        \vdots & \vdots & \ddots & \vdots\\
        \mathbf{P}_{j_{|\mathcal{N}_i|}i} & \mathbf{P}_{j_{|\mathcal{N}_i|}j_1} & \cdots & \mathbf{P}_{j_{|\mathcal{N}_i|}}
    \end{bmatrix}.
\end{align*}

The local posterior covariance and state estimate in Eq.~\eqref{eq3} can now be represented as
\begin{align}
    \mathbf{M}_i &= \left(\boldsymbol{\mathsf{H}}_i^\top \boldsymbol{\mathsf{P}}_i^{-1} \boldsymbol{\mathsf{H}}_i + \sum_{j \in \mathcal{J}_i} \mathbf{H}_j^\top \mathbf{R}_j^{-1} \mathbf{H}_j\right)^{-1}, \label{eq4} \\
    \hat{\mathbf{x}}_i &= \mathbf{M}_i \left(\boldsymbol{\mathsf{H}}_i^\top \boldsymbol{\mathsf{P}}_i^{-1} \bar{\boldsymbol{\mathsf{x}}}_i + \sum_{j \in \mathcal{J}_i} \mathbf{H}_j^\top \mathbf{R}_j^{-1} \mathbf{z}_j\right), \label{eq5} 
\end{align}
where $\boldsymbol{\mathsf{H}}_i \triangleq \left[ \mathbf{I}_n^\top \ \cdots \ \mathbf{I}_n^\top \right]^\top \in \mathbb{R}^{|\mathcal{J}_i|n \times n}$ and $\bar{\boldsymbol{\mathsf{x}}}_i \triangleq \left[\bar{\mathbf{x}}_i^\top \ \bar{\mathbf{x}}_{j_1}^\top \ \cdots \ \bar{\mathbf{x}}_{j_{|\mathcal{N}_i|}}^\top \right]^\top$. The expressions in Eqs.~\eqref{eq4} and \eqref{eq5} constitute a theoretically optimal rule for fusing multiple correlated estimates in a distributed setting. With sufficient time and appropriate connectivity conditions, these local posterior covariances $\mathbf{M}_i$ converge to the centralized posterior covariance $\mathbf{M}_c$.

Despite this optimality, the formulation is not practically realizable in a distributed setting. Specifically, each agent would need to maintain not only its local prior covariance $\mathbf{P}_i$ and those of its neighbors $\mathbf{P}_j$ for all $j \in \mathcal{N}_i$, but also all pairwise cross-covariances $\mathbf{P}_{ij}$, whose computation depends on information beyond the agent's local neighborhood and therefore cannot be computed locally. As a result, this scheme requires substantial communication and computational overhead (inversion of $\boldsymbol{\mathsf{P}}_i$), which may lead to it becoming impractical as the network size grows. Therefore, practical distributed filtering algorithms trade optimality for scalability.

Some distributed filters approximate $\boldsymbol{\mathsf{H}}_i^\top \boldsymbol{\mathsf{P}}_i^{-1} \boldsymbol{\mathsf{H}}_i$ with $\sum_{j \in \mathcal{J}_i} \mathbf{P}_j^{-1}$ \cite{wang2014information}. The local posterior covariance and state estimate in Eqs.~\eqref{eq4} and \eqref{eq5}, respectively, can then be expressed as
\begin{align}
    \mathbf{M}_i &= \left( \sum_{j \in \mathcal{J}_i} \mathbf{P}_j^{-1} + \sum_{j \in \mathcal{J}_i} \mathbf{H}_j^\top \mathbf{R}_j^{-1} \mathbf{H}_j \right)^{-1}, \label{eq6} \\
    \hat{\mathbf{x}}_i &= \mathbf{M}_i \left( \sum_{j \in \mathcal{J}_i} \mathbf{P}_j^{-1} \bar{\mathbf{x}}_j + \sum_{j \in \mathcal{J}_i} \mathbf{H}_j^\top \mathbf{R}_j^{-1} \mathbf{z}_j\right). \label{eq7} 
\end{align}

This formulation is computationally tractable, and all terms can be computed in a distributed setup, though it is suboptimal. However, the summation of prior information matrices in Eq.~\eqref{eq6} introduces a further complication. Agents with larger neighborhoods aggregate a disproportionate amount of prior information, causing $\mathbf{M}_i$ to vary with agent degree and to diverge from $\mathbf{M}_c$. To address this, the sum is replaced with an average consensus protocol \cite{kamal2011generalized, kamal2013information} as
\begin{align}
    \widetilde{\mathbf{P}}_i^{-1} \triangleq \mathbf{P}_i^{-1} 
	+ \epsilon \sum_{j \in \mathcal{N}_i} \left( \mathbf{P}_j^{-1} - \mathbf{P}_i^{-1} \right). \label{tildeP}
\end{align}
Although a single iteration does not guarantee convergence to the global average $\frac{1}{N}\sum_{i=1}^{N} \mathbf{P}_i^{-1}$ within a given time step, its repeated application across successive time steps progressively reduces the disagreement among the local posterior covariances $\mathbf{M}_i$. Under appropriate connectivity conditions, the local covariances are then driven toward the centralized value $\mathbf{M}_c$.

Note that consensus on the information vectors $\mathbf{P}_j^{-1}\bar{\mathbf{x}}_j$ from Eq.~\eqref{eq7} is equivalent to a weighted-consensus on the prior state estimates $\bar{\mathbf{x}}_j$, where each agent's contribution is weighted by its own prior information matrix $\mathbf{P}_j^{-1}$, defined as
\begin{align}
    \widetilde{\mathbf{P}}_i^{-1} \widetilde{\mathbf{x}}_i \triangleq \mathbf{P}_i^{-1}\bar{\mathbf{x}}_i 
	+ \epsilon \sum_{j \in \mathcal{N}_i} \left( \mathbf{P}_j^{-1}\bar{\mathbf{x}}_j - \mathbf{P}_i^{-1}\bar{\mathbf{x}}_i \right).
\end{align}

This formulation inherently accommodates oblivious agents. These agents carry no measurement information about the target, resulting in a large prior covariance $\mathbf{P}_i$ and a near-zero information matrix $\mathbf{P}_i^{-1}$. Consequently, such an agent's contribution to the weighted average is automatically suppressed, preventing uninformed prior estimates from degrading the fused estimate.

\begin{remark}
    For the consensus step to converge, $\epsilon$ must satisfy $\epsilon \in (0,1/\Delta)$. However, in a dynamic network where the topology changes with time, $\Delta$ is neither fixed nor locally observable. The addition or removal of edges can change $\Delta$, causing a previously admissible choice of $\epsilon$ to violate the convergence condition. If $\epsilon$ is set conservatively based on the largest possible $\Delta$, the resulting $\epsilon$ may be unnecessarily small, degrading the convergence rate. Thus, a fixed $\epsilon$ may lead to an impractical choice.
\end{remark}


\section{Information-Weighted Distributed Kalman Filter for Multi-Target Tracking} \label{sec:proposed}

This section derives the proposed filter, introduces the dynamic, locally computable weighting strategy, and presents the data association procedure used for multi-target tracking.


\subsection{Distributed Consensus Update}

To address the limitations associated with a fixed consensus step size, we introduce dynamic consensus weights that do not require global topology information and can be computed locally. Since each agent has access to the prior covariances of its neighbors, we first associate a scalar uncertainty measure with each neighboring agent's prior estimate. As $\mathbf{P}_{j}^t \in \mathbb{S}^n_{++}$ with eigenvalues $\lambda_1 \big(\mathbf{P}_j^t\big) \geq \cdots \geq \lambda_n \big(\mathbf{P}_j^t\big) > 0$, the expected squared Euclidean norm of the prior estimation error satisfies
\begin{align}
    \mathbb{E}\left[ \| \mathbf{x}^t - \bar{\mathbf{x}}_j^t \|^2 \right] = \sum_{r=1}^{n}\lambda_r \big(\mathbf{P}_j^t\big) = \trace \big(\mathbf{P}_j^t \big),
\end{align}
which constitutes the total mean squared error (MSE) of agent $j$'s prior and is equal to the trace of the prior covariance. Its reciprocal, defined as
\begin{align*}
    \rho_j^t \triangleq \left(\trace \big(\mathbf{P}_j^t \big) \right)^{-1},
\end{align*}
therefore provides a scalar measure of the prior reliability of agent $j$, where larger values indicate more informative estimates.

To preserve reciprocity in the information exchange over an undirected communication graph, we define a symmetric information metric of an edge $(i,j)$ as the sum of the information values of its two endpoints, represented by
\begin{align}\label{Eq:phi_ij_1}
	\phi_{ij}^t \triangleq \rho_i^t + \rho_j^t = \left(\trace \big(\mathbf{P}_i^t \big) \right)^{-1} + \left(\trace \big(\mathbf{P}_j^t \big) \right)^{-1}.
\end{align}
This quantity measures the joint reliability of the prior estimates of agents $i$ and $j$. Compared with other scalar metrics such as the determinant or spectral norm, the trace offers a lower computational burden, requiring only the addition of diagonal entries, and is numerically more stable.

The normalized weight assigned to each agent $j$ is defined as
\begin{align} \label{weights}
	w_{ij}^t \triangleq \begin{cases}
	    \dfrac{\phi_{ij}^t}{\sum_{r \in \mathcal{J}_i} \phi_{ir}^t} &,\ \text{if } j \in \mathcal{J}_i,\\[1em]
        \quad \quad 0 &,\ \text{if } j \notin \mathcal{J}_i.
	\end{cases}
\end{align}
Using these weights, agent $i$ computes its weighted-consensus estimate as
\begin{align}
	\left(\widetilde{\mathbf{P}}_i^t \right)^{-1} &\triangleq \big(\mathbf{P}_i^t\big)^{-1} + \sum_{j \in \mathcal{N}_i} w_{ij}^t \left( \big(\mathbf{P}_j^t\big)^{-1} - \big(\mathbf{P}_i^t\big)^{-1} \right), \label{tildePwij} \\
	\left(\widetilde{\mathbf{P}}_i^t \right)^{-1} \widetilde{\mathbf{x}}_i^t &\triangleq \big(\mathbf{P}_i^t\big)^{-1} \bar{\mathbf{x}}_i^t + \sum_{j \in \mathcal{N}_i} w_{ij}^t \left( \big(\mathbf{P}_j^t\big)^{-1} \bar{\mathbf{x}}_j^t - \big(\mathbf{P}_i^t\big)^{-1} \bar{\mathbf{x}}_i^t \right). \label{tildePwx}
\end{align}

The measurement update equations of the proposed filter design, henceforth referred to as the Information-Weighted Distributed Kalman Filter (IWDKF), are given by
\begin{align}
    \mathbf{M}_i^t &= \left( \left(\widetilde{\mathbf{P}}_i^t \right)^{-1} + \sum_{j \in \mathcal{J}_i} \big(\mathbf{H}_j^t\big)^\top \big(\mathbf{R}_j^t\big)^{-1} \mathbf{H}^t_j \right)^{-1},\\
	\hat{\mathbf{x}}_i^t &= \mathbf{M}_i^t \left( \left(\widetilde{\mathbf{P}}^{t}_i \right)^{-1}\widetilde{\mathbf{x}}_i^t + \sum_{j \in \mathcal{J}_i} \big(\mathbf{H}_j^t\big)^\top \big(\mathbf{R}_j^t\big)^{-1} \mathbf{z}^t_j \right). \label{x_update}
\end{align}
Therefore, the IWDKF update enables fully distributed state estimation using only local communication and locally computable consensus weights.

\begin{remark}
    Equations~\eqref{tildePwij}-\eqref{x_update} assume that each measurement $\mathbf{z}_j^t$ and each received prior estimate $\bar{\mathbf{x}}_j^t$ are already associated with target $t$. As their associations are not known a priori, both must be established before the update is applied.
\end{remark}


\subsection{Data Association}

In a multi-target environment, sensors produce unlabeled measurements, making it unclear which measurement originates from which target. Additionally, the prior estimates received from neighbors are indexed by each agent's local target labels, which need not be the same across agents. To address these problems, the Hungarian algorithm \cite{munkres1957algorithms} (also known as the Kuhn-Munkres algorithm) is widely adopted. It provides the optimal assignment for a linear cost formulation, and its polynomial-time complexity makes it suitable for real-time implementation. In the proposed framework, the Hungarian algorithm serves two purposes: to associate predicted states with incoming measurements, and to associate predicted states across different agents.

In tracking applications, the association cost matrix is commonly defined as a function of the innovation for each state–measurement pair. For example, the association cost between the predicted state $\bar{\mathbf{x}}_i^t$ of target $t$ and the $r$-th measurement $\mathbf{z}_{i,r}$ at agent $i$ is given by
\begin{equation}
    c_{tr} = \big( \mathbf{z}_{i,r} - \mathbf{H}^t_i\bar{\mathbf{x}}_i^t \big)^\top \big(\mathbf{S}_i^t\big)^{-1} \big( \mathbf{z}_{i,r} - \mathbf{H}^t_i\bar{\mathbf{x}}_i^t \big),
\end{equation}
where $\mathbf{S}_i^t = \big( \mathbf{H}_i^t \, \mathbf{P}_i^t \, (\mathbf{H}_i^t)^\top + \mathbf{R}_i^t \big)$ is the innovation covariance. Once the optimal associations have been determined, the measurement assigned to target $t$ at agent $i$ is henceforth denoted by $\mathbf{z}_i^t$. If target $t$  is not associated with any measurement at agent $i$, then $\mathbf{H}_i^t = \mathbf{0}$ and $\big(\mathbf{R}_i^t\big)^{-1} = \mathbf{0}$. This step is referred to as measurement-to-track association.

Similarly, the association cost between the predicted state $\bar{\mathbf{x}}_i^t$ of target $t$ at agent $i$ and $\bar{\mathbf{x}}_p^q$ of target $q$ at agent $p$ is represented by
\begin{align}
    c_{tq} = \big(\bar{\mathbf{x}}_i^t - \bar{\mathbf{x}}_p^q \big)^\top \big(\mathbf{P}_i^t + \mathbf{P}_p^q \big)^{-1} \big(\bar{\mathbf{x}}_i^t - \bar{\mathbf{x}}_p^q \big).
\end{align}
This step is referred to as track-to-track association. The complete IWDKF design, encompassing both the prediction and update steps, is summarized in Algorithm~\ref{alg1}.


\begin{algorithm}\small
\caption{Information-Weighted Distributed Kalman Filter}\label{alg1}
\begin{algorithmic}[1]

\Require $\bar{\mathbf{x}}_i^{t}$, $\mathbf{P}_i^{t}$, $\mathbf{H}_i^t$, $\mathbf{R}_i^t$, $\forall \{t\}_1^{N_T}$; $\mathbf{F}$, $\mathbf{G}$, $\mathbf{Q}$

\State Each agent $i$ executes the following:

\For{$k = 1,2,\dots,K-1$}
    \State Acquire local measurements $\{ \mathbf{z}_{i,r} \}_{r=1}^{L_i}$.
    
    \State Perform measurement-to-track association using the Hungarian algorithm.

    \State Compute local measurement information vector and matrix:
    \begin{align*}
        \mathbf{u}_i^t &= \left(\mathbf{H}^t_i \right)^\top \left(\mathbf{R}^t_i\right)^{-1} \mathbf{z}_i^t,\\
        \mathbf{U}_i^t &= \left(\mathbf{H}^t_i \right)^\top \left(\mathbf{R}_i^t \right)^{-1} \mathbf{H}^t_i, \ \forall \{t\}_{1}^{N_i^a}.
    \end{align*}

    \State Transmit $\mathcal{M}_i = \{\mathbf{u}_i^t, \mathbf{U}_i^t, \bar{\mathbf{x}}_i^t, \mathbf{P}_i^t \}$ to neighbors.

    \State Obtain messages $\mathcal{M}_j \ \forall j \in \mathcal{N}_i$.

    \State Perform track-to-track association using the Hungarian algorithm. Re-index received tracks to align with local target labels.

    \State Fuse measurement information vectors and matrices:
    \begin{align*}
        \mathbf{y}_i^t = \sum_{j \in \mathcal{J}_i} \mathbf{u}_j^t \ , \quad \mathbf{Y}_i^t = \sum_{j \in \mathcal{J}_i} \mathbf{U}_j^t \ , \ \forall \{t\}_{1}^{N_i^a}.
    \end{align*}

    \State Compute consensus weights:
    \begin{align*}
        \rho_i^t &= \left( \trace\big(\mathbf{P}_i^t\big) \right)^{-1},\ \ \phi_{ij}^t = \rho_i^t + \rho_j^t, \\
        w_{ij}^t &= \dfrac{\phi_{ij}^t}{\sum_{r \in \mathcal{J}_i} \phi_{ir}^t} \ \forall j \in \mathcal{J}_i.
    \end{align*}

    \State Compute weighted-consensus estimate:
    \begin{align*}
        \left(\widetilde{\mathbf{P}}_i^t \right)^{-1} &= \big(\mathbf{P}_i^t\big)^{-1} + \sum_{j \in \mathcal{N}_i} w_{ij}^t \left( \big(\mathbf{P}_j^t\big)^{-1} - \big(\mathbf{P}_i^t\big)^{-1} \right), \\
	    \left(\widetilde{\mathbf{P}}_i^t \right)^{-1} \widetilde{\mathbf{x}}_i^t &= \big(\mathbf{P}_i^t\big)^{-1} \bar{\mathbf{x}}_i^t + \sum_{j \in \mathcal{N}_i} w_{ij}^t \left( \big(\mathbf{P}_j^t\big)^{-1} \bar{\mathbf{x}}_j^t - \big(\mathbf{P}_i^t\big)^{-1} \bar{\mathbf{x}}_i^t \right).
    \end{align*}
    
    \State Compute the posterior state estimate:
    \begin{align*}
        \mathbf{M}_i^t &= \left( \left(\widetilde{\mathbf{P}}^{t}_i \right)^{-1} + \mathbf{Y}_i^t \right)^{-1},\\
        \hat{\mathbf{x}}_i^t &= \mathbf{M}_i^t \left( \left(\widetilde{\mathbf{P}}^{t}_i\right)^{-1}\widetilde{\mathbf{x}}_i^t + \mathbf{y}_i^t \right).
    \end{align*}

    \State Propagate the state estimate for the next time step:
    \begin{align*}
        \mathbf{P}_i^{t,+} &\leftarrow \mathbf{F} \mathbf{M}_i^t \mathbf{F}^\top + \mathbf{G} \mathbf{Q} \mathbf{G}^\top,\\ \bar{\mathbf{x}}_i^{t,+} &\leftarrow \mathbf{F} \hat{\mathbf{x}}_i^t.
    \end{align*}
    
\EndFor

\Ensure $\hat{\mathbf{x}}_i^t$, $\mathbf{M}_i^t$, $\bar{\mathbf{x}}_i^{t,+}$, $\mathbf{P}_i^{t,+}$, $\forall \{t\}_1^{N_T}$

\end{algorithmic}
\end{algorithm}


\section{Convergence and Consistency Analysis} \label{sec:convergence}

This section establishes the positive definiteness of the weighted-consensus state covariance matrix, analyzes the convergence properties of the proposed consensus protocol, and shows that the consensus step preserves the consistency property. To simplify notation, the target index $t$ is omitted from the equations in this section, unless otherwise stated.


\subsection{Positive Definiteness of the Weighted Consensus Covariance Matrix}

Considering the average consensus protocol, the expression for $\widetilde{\mathbf{P}}_i$, as described in Eq.~\eqref{tildeP}, can be reformulated as
\begin{align}
    \widetilde{\mathbf{P}}_i^{-1} &= \mathbf{P}_i^{-1} + \epsilon \sum_{j \in \mathcal{N}_i} \mathbf{P}_j^{-1} - \epsilon \sum_{j \in \mathcal{N}_i} \mathbf{P}_i^{-1} \nonumber \\
    &= \mathbf{P}_i^{-1} + \epsilon \sum_{j \in \mathcal{N}_i} \mathbf{P}_j^{-1} - \epsilon \, |\mathcal{N}_i| \, \mathbf{P}_i^{-1} \nonumber \\
    &= \left( 1 - \epsilon \, |\mathcal{N}_i| \right)\mathbf{P}_i^{-1} + \epsilon \sum_{j \in \mathcal{N}_i} \mathbf{P}_j^{-1},
\end{align}
where $|\mathcal{N}_i|$ denotes the cardinality of $\mathcal{N}_i$. Moreover, $\Delta$ can be expressed as $\text{max}_i |\mathcal{N}_i|$. Thus, the condition $0 < \epsilon < 1/\Delta$ guarantees that $\left( 1 - \epsilon \, |\mathcal{N}_i| \right) > 0$ for all $i$. Consequently, $\widetilde{\mathbf{P}}_i^{-1}$  is a positive linear combination of the symmetric positive definite matrices $\mathbf{P}_i^{-1}$ and $\mathbf{P}_j^{-1}$ for all $j \in \mathcal{N}_i$, which implies $\widetilde{\mathbf{P}}_i^{-1} \in \mathbb{S}^n_{++}$. This is a crucial requirement for the convergence of the filter. The following theorem establishes this positive definiteness property within the IWDKF framework.

\begin{theorem} \label{thm:positive-def}
    Consider a network of $N$ sensor agents, where each agent $i$ has a prior covariance matrix $\mathbf{P}_i \in \mathbb{S}^n_{++}$. If the weighted-consensus covariance is given by Eq.~\eqref{tildePwij} and the weights are defined by Eq.~\eqref{weights}, then the matrix $\widetilde{\mathbf{P}}_i \in \mathbb{S}^n_{++}$.
\end{theorem}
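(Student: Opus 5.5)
The plan is to mirror the argument given just above for the average consensus protocol: rewrite $\widetilde{\mathbf{P}}_i^{-1}$ as a convex combination of the matrices $\mathbf{P}_j^{-1}$, $j\in\mathcal{J}_i$, with strictly positive coefficients. Expanding Eq.~\eqref{tildePwij} gives
\begin{align*}
    \widetilde{\mathbf{P}}_i^{-1} = \Big(1 - \sum_{j \in \mathcal{N}_i} w_{ij}\Big)\mathbf{P}_i^{-1} + \sum_{j \in \mathcal{N}_i} w_{ij}\,\mathbf{P}_j^{-1}.
\end{align*}
By the definition in Eq.~\eqref{weights}, the weights over the inclusive neighborhood sum to one, $\sum_{j\in\mathcal{J}_i} w_{ij} = 1$. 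Hence the coefficient of $\mathbf{P}_i^{-1}$ is exactly $w_{ii}$, and therefore
\begin{align*}
    \widetilde{\mathbf{P}}_i^{-1} = \sum_{j\in\mathcal{J}_i} w_{ij}\,\mathbf{P}_j^{-1}.
\end{align*}

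Next we show every weight is strictly positive. Since each $\mathbf{P}_j\in\mathbb{S}^n_{++}$, its trace is the sum of its positive eigenvalues, so $\trace(\mathbf{P}_j)>0$ and $\rho_j>0$. Consequently $\phi_{ij}=\rho_i+\rho_j>0$ for every $j\in\mathcal{J}_i$, including $\phi_{ii}=2\rho_i$. The normalizing denominator $\sum_{r\in\mathcal{J}_i}\phi_{ir}$ is then positive and finite, so each $w_{ij}\in(0,1]$ is well defined and positive. In particular $w_{ii}>0$. This replaces the condition $\epsilon<1/\Delta$ used in the average-consensus case, and it holds with no global network information.

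The conclusion follows from standard cone properties. Each $\mathbf{P}_j^{-1}$ is symmetric positive definite, since the inverse of a positive definite matrix is positive definite. For any nonzero $\mathbf{v}$,
\begin{align*}
    \mathbf{v}^\top\widetilde{\mathbf{P}}_i^{-1}\mathbf{v} = \sum_{j\in\mathcal{J}_i} w_{ij}\,\mathbf{v}^\top\mathbf{P}_j^{-1}\mathbf{v} \ \geq\ w_{ii}\,\mathbf{v}^\top\mathbf{P}_i^{-1}\mathbf{v} > 0,
\end{align*}
so $\widetilde{\mathbf{P}}_i^{-1}\in\mathbb{S}^n_{++}$ (symmetry is clear). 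It follows that $\widetilde{\mathbf{P}}_i^{-1}$ is invertible and its inverse $\widetilde{\mathbf{P}}_i$ also lies in $\mathbb{S}^n_{++}$.

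There is no real obstacle here. The only point needing care is the cancellation that turns the coefficient of $\mathbf{P}_i^{-1}$ into $w_{ii}$, which depends on the weights being normalized over $\mathcal{J}_i$ rather than over $\mathcal{N}_i$. The positivity of $w_{ii}$ is also what makes the argument work for an isolated agent with $\mathcal{N}_i=\varnothing$: there $w_{ii}=1$ and $\widetilde{\mathbf{P}}_i=\mathbf{P}_i$.
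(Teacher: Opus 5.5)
Your proposal is correct and follows essentially the same route as the paper: you use the normalization of the weights over $\mathcal{J}_i$ to rewrite $\widetilde{\mathbf{P}}_i^{-1}=\sum_{j\in\mathcal{J}_i} w_{ij}\mathbf{P}_j^{-1}$ with strictly positive $w_{ij}$, and then check positivity of the quadratic form. Bounding below by the $w_{ii}$ term alone, instead of using positivity of every term, is only a cosmetic variant, and it handles the isolated-agent case $\mathcal{N}_i=\varnothing$ cleanly.
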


\begin{proof}
    Since each $\mathbf{P}_i \in \mathbb{S}^n_{++}$, it follows that $(\mathbf{P}_i)^{-1} \in \mathbb{S}^n_{++}$. By definition, $\phi_{ij}$ is the sum of the reciprocals of two strictly positive scalars    
    $\trace(\mathbf{P}_i)$ and $\trace(\mathbf{P}_j)$, given in Eq. \eqref{Eq:phi_ij_1}. Therefore, the normalizing denominator $\sum_{r \in \mathcal{J}_i} \phi_{ir}$ is strictly positive, which ensures that each weight $w_{ij}>0$. 
    
    By construction,
    \begin{align}\label{sumweights}
        \sum_{j \in \mathcal{J}_i} w_{ij} = w_{ii} + \sum_{j \in \mathcal{N}_i} w_{ij} = 1.
    \end{align}

    Using Eq.~\eqref{sumweights}, we can rewrite Eq.~\eqref{tildePwij} as
    \begin{align}
        \widetilde{\mathbf{P}}_i^{-1} &= \mathbf{P}_i^{-1} + \sum_{j \in \mathcal{N}_i} w_{ij} \mathbf{P}_j^{-1} - \mathbf{P}_i^{-1} \sum_{j \in \mathcal{N}_i} w_{ij} \nonumber \\
        &= \mathbf{P}_i^{-1} + \sum_{j \in \mathcal{N}_i} w_{ij} \mathbf{P}_j^{-1} - \mathbf{P}_i^{-1}(1 - w_{ii}) \nonumber \\
        &= w_{ii} \mathbf{P}_i^{-1} + \sum_{j \in \mathcal{N}_i} w_{ij} \mathbf{P}_j^{-1} \nonumber \\
        &= \sum_{j \in \mathcal{J}_i} w_{ij} \mathbf{P}_j^{-1}. \label{protocol}
    \end{align}

    Consider a vector $\mathbf{x} \in \mathbb{R}^n$ and $\mathbf{x} \neq \mathbf{0}$. Then, the following can be expressed as
    \begin{align*}
        \mathbf{x}^\top \widetilde{\mathbf{P}}_i^{-1} \mathbf{x} = \sum_{j\in\mathcal J_i} w_{ij} \, \mathbf{x}^\top \mathbf{P}_j^{-1} \mathbf{x}.
    \end{align*}
    For each $j\in\mathcal J_i$, we have $\mathbf{x}^\top \mathbf{P}_j^{-1} \mathbf{x} > 0$, because $\mathbf{P}_j^{-1} \in \mathbb{S}^n_{++}$ and each $w_{ij}>0$. Therefore, the weighted sum is strictly positive, i.e., $\mathbf{x}^\top \widetilde{\mathbf{P}}_i^{-1} \mathbf{x} > 0$, which implies $\widetilde{\mathbf{P}}_i^{-1} \in \mathbb{S}^n_{++}$. Consequently, $\widetilde{\mathbf{P}}_i \in \mathbb{S}^n_{++}$.
\end{proof}

\begin{remark}
    The result relies solely on the local assumption that the prior covariances in the inclusive neighborhood of agent $i$ are symmetric positive definite. No global network parameters, such as $\Delta$, are required to guarantee $\widetilde{\mathbf{P}}_i \in \mathbb{S}^n_{++}$. However, Theorem~\ref{thm:positive-def} is a property of each agent and does not imply agreement among agents, which must be established separately.
\end{remark}


\subsection{Convergence Analysis}

This subsection shows that the proposed weighting strategy preserves the asymptotic agreement property of the underlying consensus protocol. Accordingly, an inner consensus iteration index $\ell$ is introduced.

Let $\mathbf{X}_i(\ell)$ denote the information matrix maintained by agent $i$ after the $\ell$-th consensus iteration, with initialization $\mathbf{X}_i(0) = \mathbf{P}_i^{-1}$. The consensus protocol in Eq.~\eqref{protocol} can now be expressed as
\begin{align}
    \mathbf{X}_i(\ell+1) = \sum_{j \in \mathcal{J}_i(\ell)} w_{ij}(\ell)\,\mathbf{X}_j(\ell),
    \quad i=1,\ldots,N. \label{consensus_law}
\end{align}

After $M$ consensus iterations, the consensus information matrix is represented by
\begin{align}
    \widetilde{\mathbf{P}}_i^{-1} = \mathbf{X}_i(M),
\end{align}
which is subsequently used in the measurement update. 

Define the stacked vector $\mathbf{s(\ell)}$ as
\begin{align*}
    \mathbf{s}(\ell) \triangleq \left[ \operatorname{vec}(\mathbf{X}_1(\ell))^\top \ \cdots \ \operatorname{vec}(\mathbf{X}_N(\ell))^\top \right]^\top \in \mathbb{R}^{Nn^2}.
\end{align*}
The collective dynamics can then be written as
\begin{align} \label{collective}
    \mathbf{s}(\ell + 1) = \left(\mathbf{W}(\ell) \otimes \mathbf{I}_{n^2} \right) \, \mathbf{s}(\ell),
\end{align}
where $\mathbf{W}(\ell) = \left[w_{ij}(\ell)\right] \in \mathbb{R}^{N\times N}$, with $w_{ij}(\ell)>0$ if $(i,j) \in \mathcal{E}(\ell)$ or $i=j$, and $w_{ij}(\ell)=0$ otherwise. By construction, $\mathbf{W}(\ell)$ is row-stochastic with $w_{ii}(\ell)>0$ for every $i,\ell$. Iterating Eq.~\eqref{collective}, we obtain
\begin{align}
    \mathbf{s}(\ell) &= \big(\mathbf{W}(\ell-1) \otimes \mathbf{I}_{n^2}\big) \cdots \big(\mathbf{W}(0) \otimes \mathbf{I}_{n^2}\big) \, \mathbf{s}(0) \nonumber\\
    &= \big( \boldsymbol{\Phi}(\ell,0) \otimes \mathbf{I}_{n^2} \big) \, \mathbf{s}(0), \label{collective2}
\end{align}
where $\boldsymbol{\Phi}(\ell_2,\ell_1) \triangleq \mathbf{W}(\ell_2-1) \mathbf{W}(\ell_2-2) \cdots \mathbf{W}(\ell_1)$, when $\ell_2 > \ell_1$ and $\boldsymbol{\Phi}(\ell_2,\ell_1) \triangleq \mathbf{I}_N$ when $\ell_2 = \ell_1$. Note that $\boldsymbol{\Phi}(\ell_2,\ell_1)$ is row-stochastic for every $\ell_2 \geq \ell_1$, since it is a product of row-stochastic matrices.


\begin{lemma}
\label{lem:uniform-weight}
There exists $\alpha \triangleq \underline{p}/\big(N\overline{p}\big) > 0$ such that
\begin{align*}
    w_{ij}(\ell) \ \geq\ \alpha, \quad \forall j \in \mathcal{J}_i(\ell),\ \forall i,\ \forall \ell \geq 0.
\end{align*}
\end{lemma}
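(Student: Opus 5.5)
The plan is to bound the numerator of $w_{ij}(\ell)$ in Eq.~\eqref{weights} from below and its denominator from above, using only the uniform eigenvalue bounds of Assumption~\ref{ass:constants}. Since the weights are built entirely from the scalars $\rho_j=(\trace(\mathbf{P}_j))^{-1}$, the first step is to turn $\underline{p}\,\mathbf{I}_n \preceq \mathbf{P}_j \preceq \overline{p}\,\mathbf{I}_n$ into bounds on traces. Each eigenvalue of $\mathbf{P}_j$ lies in $[\underline{p},\overline{p}]$, and the trace is the sum of the $n$ eigenvalues, so
\begin{align*}
    n\,\underline{p} \ \leq\ \trace(\mathbf{P}_j) \ \leq\ n\,\overline{p},
    \qquad\text{hence}\qquad
    \frac{1}{n\overline{p}} \ \leq\ \rho_j \ \leq\ \frac{1}{n\underline{p}},
\end{align*}
uniformly in $j$, $t$ and $k$. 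In the inner-iteration setting of Eq.~\eqref{consensus_law}, I will take the weights at every $\ell$ to be computed from covariances satisfying these same bounds; this is the modelling step to state explicitly.

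Next, for any edge $(i,j)$ or $j=i$, I bound $\phi_{ij}=\rho_i+\rho_j$ between $2/(n\overline{p})$ and $2/(n\underline{p})$. The denominator $\sum_{r\in\mathcal{J}_i(\ell)}\phi_{ir}$ has $|\mathcal{J}_i(\ell)|\leq N$ terms, so it is at most $2N/(n\underline{p})$. Dividing the two bounds gives
\begin{align*}
    w_{ij}(\ell) \ \geq\ \frac{2/(n\overline{p})}{2N/(n\underline{p})} \ =\ \frac{\underline{p}}{N\overline{p}} \ =\ \alpha,
\end{align*}
for all $j\in\mathcal{J}_i(\ell)$, all $i$ and all $\ell$. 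Positivity of $\alpha$ follows from $0<\underline{p}\leq\overline{p}<\infty$ and $N$ finite.

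There is no real obstacle here. The only points needing care are:
\begin{enumerate}
    \item justifying that Assumption~\ref{ass:constants} applies to the matrices entering the weights at each inner iteration $\ell$;
    \item using the crude count $|\mathcal{J}_i(\ell)|\leq N$, so that no local degree information enters and $\alpha$ is uniform.
\end{enumerate}
With the sharper count $|\mathcal{J}_i|\leq\Delta+1$ one could improve the constant, but the stated $\alpha$ only needs the bound by $N$.
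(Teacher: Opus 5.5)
Your proposal is correct and follows the paper's proof essentially line for line. Both arguments use Assumption~\ref{ass:constants} to get $\rho_i\in[1/(n\overline{p}),\,1/(n\underline{p})]$, then bound $\phi_{ij}\ge 2\rho_{\min}$ in the numerator and the denominator by $N\cdot 2\rho_{\max}$ via $|\mathcal{J}_i(\ell)|\le N$, which gives $\alpha=\underline{p}/(N\overline{p})$. The paper makes the same implicit modelling step you flag (applying the covariance bounds at each inner iteration $\ell$) without comment.
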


\begin{proof}
Since $\rho_i(\ell) = \left(\trace \big(\mathbf{P}_i(\ell) \big)\right)^{-1}$ and Assumption~\ref{ass:constants} gives $\trace \big(\mathbf{P}_i(\ell)\big) \in \big[n\underline{p}, n\overline{p}\big]$, we get
\begin{align*}
    \rho_i(\ell) \in \big[\rho_{\min}, \rho_{\max}\big] \triangleq \left[\frac{1}{n\overline{p}},\ \frac{1}{n\underline{p}}\right], \quad \forall \, i,\ell.
\end{align*}
Consequently, $\phi_{ij}(\ell) = \rho_i(\ell)+\rho_j(\ell) \in \big[2\rho_{\min}, 2\rho_{\max}\big]$ for all $i,j,\ell$. Since $|\mathcal{J}_i(\ell)| \leq N$, we can write
\begin{align*}
    w_{ij}(\ell) = \frac{\phi_{ij}(\ell)}{\sum_{r \in \mathcal{J}_i(\ell)}\phi_{ir}(\ell)} \ \geq\ \frac{2\rho_{\min}}{N 2\rho_{\max}} = \frac{\underline{p}}{N\overline{p}} \triangleq \alpha > 0,
\end{align*}
for every $j \in \mathcal{J}_i(\ell)$.
\end{proof}


\begin{lemma}[\cite{jadbabaie2003coordination}, Lemma 2]
\label{lem:jadbabaie-lem2}
Let $p \geq 2$, $\mathbf{D}_1,\ldots,\mathbf{D}_p$ be nonnegative $N{\times}N$ matrices with positive diagonal entries, and $\mu,\xi$ be the smallest and largest diagonal entries among them, respectively. Then, the following can be expressed as
\begin{align*}
    \mathbf{D}_1\cdots\mathbf{D}_p \ \geq\ \left(\frac{\mu^2}{2\xi}\right)^{p-1} \big(\mathbf{D}_1+\cdots+\mathbf{D}_p\big).
\end{align*}
\end{lemma}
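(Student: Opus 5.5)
This is Lemma 2 of Jadbabaie, Lin and Morse, so the paper may simply cite it; the plan below gives a self-contained argument, by induction on $p$.

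\emph{Base case $p=2$.} Write $\mathbf{D}_1 = \mathbf{\Lambda}_1 + \mathbf{B}_1$ and $\mathbf{D}_2 = \mathbf{\Lambda}_2 + \mathbf{B}_2$, where $\mathbf{\Lambda}_k$ is the diagonal part and $\mathbf{B}_k \geq 0$ is the off-diagonal part. Expanding the product and dropping the nonnegative term $\mathbf{B}_1\mathbf{B}_2$ gives
\begin{align*}
\mathbf{D}_1\mathbf{D}_2 \geq \mathbf{\Lambda}_1\mathbf{\Lambda}_2 + \mathbf{\Lambda}_1\mathbf{B}_2 + \mathbf{B}_1\mathbf{\Lambda}_2 \geq \mu^2 \mathbf{I}_N + \mu(\mathbf{B}_1+\mathbf{B}_2).
\end{align*}
Next compare with the right-hand side $\frac{\mu^2}{2\xi}(\mathbf{D}_1+\mathbf{D}_2) = \frac{\mu^2}{2\xi}(\mathbf{\Lambda}_1+\mathbf{\Lambda}_2) + \frac{\mu^2}{2\xi}(\mathbf{B}_1+\mathbf{B}_2)$ entrywise.
\begin{itemize}
\item The diagonal part satisfies $\frac{\mu^2}{2\xi}(\mathbf{\Lambda}_1+\mathbf{\Lambda}_2) \leq \frac{\mu^2}{2\xi}\cdot 2\xi\,\mathbf{I}_N = \mu^2\mathbf{I}_N$.
\item The off-diagonal part satisfies $\frac{\mu^2}{2\xi} \leq \mu$, since $\mu \leq \xi$.
\end{itemize}
Hence $\mathbf{D}_1\mathbf{D}_2 \geq \frac{\mu^2}{2\xi}(\mathbf{D}_1+\mathbf{D}_2)$.

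\emph{Inductive step.} Suppose the claim holds for $p$ matrices, and let $\mathbf{C} = \mathbf{D}_1\cdots\mathbf{D}_p$, so that $\mathbf{C} \geq c_p \mathbf{S}_p$ with $c_p=(\mu^2/2\xi)^{p-1}$ and $\mathbf{S}_p = \mathbf{D}_1+\cdots+\mathbf{D}_p$. Because $\mathbf{D}_{p+1} \geq 0$, multiplication on the right preserves the entrywise inequality:
\begin{align*}
\mathbf{C}\mathbf{D}_{p+1} \geq c_p\, \mathbf{S}_p\mathbf{D}_{p+1}.
\end{align*}
The difficulty is that $\mathbf{S}_p$ has diagonal entries as large as $p\xi$, so the base case cannot be applied to the pair $(\mathbf{S}_p,\mathbf{D}_{p+1})$ directly.

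Instead, expand $\mathbf{S}_p\mathbf{D}_{p+1}$ and bound it using diagonals. Right-multiplying by $\mathbf{D}_{p+1} \geq \mu\mathbf{I}_N + \mathbf{B}_{p+1}$ gives $\mathbf{S}_p\mathbf{D}_{p+1} \geq \mu\mathbf{S}_p + \mathbf{S}_p\mathbf{B}_{p+1}$. Since $\mathbf{S}_p \geq \mathbf{\Lambda}_1 \geq \mu\mathbf{I}_N$, this yields
\begin{align*}
\mathbf{S}_p\mathbf{D}_{p+1} \geq \mu\mathbf{S}_p + \mu\mathbf{B}_{p+1} \geq \mu\mathbf{S}_p + \mu\mathbf{B}_{p+1} + \mu\mathbf{\Lambda}_{p+1} - \mu\mathbf{\Lambda}_{p+1}.
\end{align*}
It remains to absorb the diagonal of $\mathbf{D}_{p+1}$, which is at most $\xi\mathbf{I}_N$. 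For this, use $\mathbf{S}_p \geq p\mu\mathbf{I}_N \geq \mu\mathbf{I}_N$ on the diagonal: half of $\mu\mathbf{S}_p$ dominates $\frac{\mu^2}{2\xi}\mathbf{\Lambda}_{p+1}$, because $\frac{\mu}{2}\cdot\mu \geq \frac{\mu^2}{2\xi}\xi$. The other half satisfies $\frac{\mu}{2}\mathbf{S}_p \geq \frac{\mu^2}{2\xi}\mathbf{S}_p$. Together these give
\begin{align*}
\mathbf{S}_p\mathbf{D}_{p+1} \geq \frac{\mu^2}{2\xi}\bigl(\mathbf{S}_p + \mathbf{D}_{p+1}\bigr),
\end{align*}
and therefore $\mathbf{C}\mathbf{D}_{p+1} \geq c_{p+1}\mathbf{S}_{p+1}$, which closes the induction.

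The main obstacle is this bookkeeping in the inductive step, namely splitting $\mu\mathbf{S}_p$ so that one part dominates the new diagonal term and the other part dominates $\mathbf{S}_p$. Both parts rely only on $\mu \leq \xi$ and on the positive diagonals.
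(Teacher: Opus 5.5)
The paper gives no argument of its own for this lemma; it only cites Lemma~2 of Jadbabaie, Lin and Morse. Your induction is therefore a self-contained substitute, and it is correct. Three details should be tidied. In ``together these give'' you also use $\mu\mathbf{B}_{p+1}\geq\frac{\mu^2}{2\xi}\mathbf{B}_{p+1}$, i.e.\ $\mu\leq 2\xi$, which is the same check as in your base case. You invoke the inductive hypothesis for $\mathbf{D}_1,\ldots,\mathbf{D}_p$ with the $\mu,\xi$ of the larger family $\mathbf{D}_1,\ldots,\mathbf{D}_{p+1}$. This is legitimate because every step uses only $\mu\leq[\mathbf{D}_k]_{ll}\leq\xi$, so you are really proving the claim for arbitrary such bounds, but say so. Finally, the line that adds and subtracts $\mu\mathbf{\Lambda}_{p+1}$ is vacuous and can be dropped.

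Your inductive step differs from the usual one, which is essentially that of the cited reference. There the two-factor bound is proved for an arbitrary pair with diagonals in $[\mu,\xi]$. The obstacle you point out (the large diagonal of $\mathbf{S}_p$) is then removed by distributivity:
\[
\mathbf{S}_p\mathbf{D}_{p+1}=\sum_{i=1}^{p}\mathbf{D}_i\mathbf{D}_{p+1}\ \geq\ \frac{\mu^2}{2\xi}\big(\mathbf{S}_p+p\,\mathbf{D}_{p+1}\big)\ \geq\ \frac{\mu^2}{2\xi}\,\mathbf{S}_{p+1}.
\]
Equivalently, apply your base case to the pair $(\mathbf{S}_p/p,\mathbf{D}_{p+1})$, since the diagonal of $\mathbf{S}_p/p$ still lies in $[\mu,\xi]$. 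Either way, no diagonal bookkeeping is needed.

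Your step instead proves an asymmetric two-factor inequality: $\mathbf{S}\mathbf{D}\geq\frac{\mu^2}{2\xi}(\mathbf{S}+\mathbf{D})$ for any nonnegative $\mathbf{S}$ whose diagonal is only bounded below by $\mu$, and any nonnegative $\mathbf{D}$ with diagonal in $[\mu,\xi]$. This is slightly more general, and since it already covers $p=1$, your separate base case is redundant.

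The distributive route is shorter and shows that the lemma's constant is exactly the pairwise constant, which exposes that the stated rate is not sharp. Entrywise, with $d_j=[\mathbf{D}_j]_{ll}\geq\mu$ and $d_k=[\mathbf{D}_k]_{ll}\geq\mu$, one has $[\mathbf{D}_j\mathbf{D}_k]_{ll}\geq d_jd_k\geq\frac{\mu}{2}(d_j+d_k)$. Each off-diagonal entry satisfies $[\mathbf{D}_j\mathbf{D}_k]_{lm}\geq\mu\big([\mathbf{D}_j]_{lm}+[\mathbf{D}_k]_{lm}\big)$. Hence $(\mu/2)^{p-1}$ may replace $(\mu^2/(2\xi))^{p-1}$. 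The weaker constant is all that Lemma~\ref{lem:block-bound} needs, so nothing downstream in the paper changes.
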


\begin{proof}
    See \cite[Lemma~2]{jadbabaie2003coordination}.
\end{proof}


\begin{lemma}
\label{lem:block-bound}
For every $\ell \geq 0$,
\begin{align*}
    \boldsymbol{\Phi}(\ell{+}B,\ell) \ \geq\ \gamma\big(\mathbf{A}_{\mathrm{u}}^{(\ell)} + \mathbf{I}_N\big),
\end{align*}
where $\mathbf{A}_{\mathrm{u}}^{(\ell)}$ is the adjacency matrix of $\bigcup_{t=\ell}^{\ell+B-1}\mathcal{G}(t)$ and $\gamma \triangleq \alpha\big(\alpha^2/2\big)^{B-1} > 0$.
\end{lemma}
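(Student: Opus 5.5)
We will apply Lemma~\ref{lem:jadbabaie-lem2} to the $B$ consecutive factors of $\boldsymbol{\Phi}(\ell{+}B,\ell)=\mathbf{W}(\ell{+}B{-}1)\cdots\mathbf{W}(\ell)$, with $\mathbf{D}_1=\mathbf{W}(\ell{+}B{-}1),\ldots,\mathbf{D}_B=\mathbf{W}(\ell)$. Each $\mathbf{W}(t)$ is nonnegative and, by Lemma~\ref{lem:uniform-weight}, has diagonal entries $w_{ii}(t)\geq\alpha>0$, so the hypotheses of Lemma~\ref{lem:jadbabaie-lem2} hold. Since every $\mathbf{W}(t)$ is row-stochastic, its diagonal entries are at most $1$. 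Hence the smallest diagonal entry satisfies $\mu\geq\alpha$ and the largest satisfies $\xi\leq 1$, which gives $\mu^2/(2\xi)\geq\alpha^2/2$. Lemma~\ref{lem:jadbabaie-lem2} then yields, for $B\geq 2$,
\begin{align*}
\boldsymbol{\Phi}(\ell{+}B,\ell)\ \geq\ \left(\frac{\alpha^2}{2}\right)^{B-1}\sum_{t=\ell}^{\ell+B-1}\mathbf{W}(t).
\end{align*}
The case $B=1$ is handled directly. There $\boldsymbol{\Phi}(\ell{+}1,\ell)=\mathbf{W}(\ell)$, and the entrywise bound below gives the claim with $\gamma=\alpha$.

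The next step is an entrywise lower bound on the sum of the weight matrices. By Lemma~\ref{lem:uniform-weight}, $w_{ij}(t)\geq\alpha$ whenever $j\in\mathcal{J}_i(t)$, and all entries are nonnegative, so $\mathbf{W}(t)\geq\alpha\big(\mathbf{A}(t)+\mathbf{I}_N\big)$. Summing over $t$ gives
\begin{align*}
\sum_{t=\ell}^{\ell+B-1}\mathbf{W}(t)\ \geq\ \alpha\sum_{t=\ell}^{\ell+B-1}\big(\mathbf{A}(t)+\mathbf{I}_N\big)\ \geq\ \alpha\big(\mathbf{A}_{\mathrm{u}}^{(\ell)}+\mathbf{I}_N\big).
\end{align*}
The last inequality needs care. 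Consider an entry $(i,j)$ of $\mathbf{A}_{\mathrm{u}}^{(\ell)}+\mathbf{I}_N$ equal to $1$. Either $i=j$, or $(i,j)\in\mathcal{E}(t)$ for at least one $t$ in the window. In both cases at least one summand contributes $\alpha$ to that entry, so the sum is at least $\alpha$ there. Where the union entry is $0$, the bound is trivial. Combining the two displays gives $\gamma=\alpha(\alpha^2/2)^{B-1}$, as claimed.

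The main obstacle is bookkeeping rather than anything deep. First, I need to confirm that the ordering of factors in $\boldsymbol{\Phi}$ is irrelevant to Lemma~\ref{lem:jadbabaie-lem2}, since the right-hand side is a symmetric sum. Second, I need that replacing $\mu,\xi$ by the bounds $\alpha$ and $1$ only weakens the inequality. This holds because $\mu^2/(2\xi)$ is increasing in $\mu$ and decreasing in $\xi$, and everything involved is nonnegative. One subtle point to check is that the consensus iteration index $\ell$ is identified with the graph time index, so that $\mathcal{G}(t)$ and Assumption~\ref{ass:constants} apply at each iteration, as already done in Lemma~\ref{lem:uniform-weight}.
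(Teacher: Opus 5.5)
Your proposal is correct and matches the paper's proof step for step: you apply Lemma~\ref{lem:jadbabaie-lem2} with $\mu\geq\alpha$ and $\xi\leq 1$, then use $\mathbf{W}(t)\geq\alpha\big(\mathbf{A}(t)+\mathbf{I}_N\big)$ summed over the window, dominated below by $\alpha\big(\mathbf{A}_{\mathrm{u}}^{(\ell)}+\mathbf{I}_N\big)$. Your explicit treatment of $B=1$, where the hypothesis $p\geq 2$ of Lemma~\ref{lem:jadbabaie-lem2} fails, is a small clarification that the paper leaves implicit.
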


\begin{proof}
By Lemma~\ref{lem:uniform-weight}, each $\mathbf{W}(t)$, $t=\ell,\ldots,\ell{+}B{-}1$, has diagonal entries at least $\alpha$, and row-stochasticity gives an upper bound of $1$ on all entries. Lemma~\ref{lem:jadbabaie-lem2} then provides
\begin{align*}
    \boldsymbol{\Phi}(\ell{+}B,\ell) \ &\geq\ \left(\frac{\alpha^2}{2}\right)^{B-1} \sum_{t=\ell}^{\ell+B-1}\mathbf{W}(t),\\
    \ &\geq\ \gamma \sum_{t=\ell}^{\ell+B-1}\big(\mathbf{A}(t)+\mathbf{I}_N\big),
\end{align*}
using $\mathbf{W}(t) \geq \alpha \big(\mathbf{A}(t)+\mathbf{I}_N\big)$ from Lemma~\ref{lem:uniform-weight}. Now, since $\sum_t \mathbf{A}(t) \geq \mathbf{A}_{\mathrm{u}}^{(\ell)}$ and $\sum_t \mathbf{I}_N \geq \mathbf{I}_N$, we obtain
\begin{align*}
    \boldsymbol{\Phi}(\ell{+}B,\ell) \ \geq\ \gamma \sum_{t=\ell}^{\ell+B-1}\big(\mathbf{A}(t)+\mathbf{I}_N\big) \ \geq\ \gamma\big(\mathbf{A}_{\mathrm{u}}^{(\ell)} + \mathbf{I}_N\big).
\end{align*}
\end{proof}


\begin{lemma}
\label{lem:reachability}
Let $\mathcal{G}_1,\ldots,\mathcal{G}_{N-1}$ be connected simple graphs on $\{1,\ldots,N\}$ and $\mathbf{C}_h \triangleq \mathbf{I}_N + \mathbf{A}_{h}$, where $\mathbf{A}_h$ is the adjacency matrix associated with $\mathcal{G}_h$. There exists $\beta = \beta(N) > 0$ such that
\begin{align*}
    \mathbf{C}_{N-1}\cdots\mathbf{C}_1 \ \geq\ \beta\,\mathbf{1}\mathbf{1}^\top,
\end{align*}
for every such sequence.
\end{lemma}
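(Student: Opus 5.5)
We prove the claim with $\beta = 1$. The argument uses two facts. First, every entry of every $\mathbf{C}_h$ is a nonnegative integer. Second, since $\mathbf{C}_h \geq \mathbf{I}_N$ entrywise and all matrices involved are nonnegative, the support of a product can only grow as more factors are added. Concretely, for nonnegative $\mathbf{Y}$ we have $\mathbf{C}_h \mathbf{Y} \geq \mathbf{Y}$ entrywise. It therefore suffices to show that every entry of $\boldsymbol{\Pi}_h \triangleq \mathbf{C}_h \cdots \mathbf{C}_1$ is strictly positive when $h = N-1$. Integrality then gives every entry $\geq 1$, hence $\boldsymbol{\Pi}_{N-1} \geq \mathbf{1}\mathbf{1}^\top$.

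The main step is a reachability-set induction. For a fixed column index $j$, define the set
\begin{align*}
    S_h(j) \triangleq \{ i : [\boldsymbol{\Pi}_h]_{ij} > 0 \},
\end{align*}
with $\boldsymbol{\Pi}_0 = \mathbf{I}_N$, so that $S_0(j) = \{j\}$. We will show that $|S_h(j)| \geq \min(h+1, N)$.

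From $\boldsymbol{\Pi}_{h} = \mathbf{C}_{h}\boldsymbol{\Pi}_{h-1}$ we get $[\boldsymbol{\Pi}_h]_{ij} \geq [\boldsymbol{\Pi}_{h-1}]_{ij}$, so $S_{h-1}(j) \subseteq S_h(j)$. We also get
\begin{align*}
    [\boldsymbol{\Pi}_h]_{ij} \geq [\mathbf{A}_h]_{ik}\,[\boldsymbol{\Pi}_{h-1}]_{kj} \quad \text{for every } k,
\end{align*}
so every $\mathcal{G}_h$-neighbor of a vertex in $S_{h-1}(j)$ also lies in $S_h(j)$.

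Now suppose $S_{h-1}(j)$ is a nonempty proper subset of $\mathcal{V}$. Because $\mathcal{G}_h$ is connected, at least one edge of $\mathcal{G}_h$ joins $S_{h-1}(j)$ to its complement. The endpoint of that edge outside $S_{h-1}(j)$ is a new element of $S_h(j)$, so $|S_h(j)| \geq |S_{h-1}(j)| + 1$. If instead $S_{h-1}(j) = \mathcal{V}$, the containment above keeps $S_h(j) = \mathcal{V}$.

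After $N-1$ steps, $|S_{N-1}(j)| = N$ for every $j$. Thus $\boldsymbol{\Pi}_{N-1}$ is entrywise positive, and by integrality $\boldsymbol{\Pi}_{N-1} \geq \mathbf{1}\mathbf{1}^\top$. This yields the lemma with $\beta = 1$, which in particular is independent of the sequence.

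The main point needing care is the order of multiplication. The product $\mathbf{C}_{N-1}\cdots\mathbf{C}_1$ applies $\mathbf{C}_1$ first when acting on columns. The induction above handles this correctly because it tracks supports of columns under left multiplication. Since each $\mathbf{C}_h$ is symmetric, an analogous row-based argument works equally well. If a different normalization were desired, one could instead keep a nonnegative real lower bound on the entries in place of integrality, but with $0/1$-plus-identity factors the bound $\beta = 1$ is immediate.
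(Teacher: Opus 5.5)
Your proof is correct and follows the paper's argument: the same column-wise reachability-set induction, using $\mathbf{C}_h \ge \mathbf{I}_N$ to keep supports nested and the connectivity of $\mathcal{G}_h$ to add at least one new vertex per factor. The only difference is the last step: the paper obtains a sequence-independent $\beta$ by minimizing over the finitely many admissible graph sequences, whereas you note that the product has nonnegative integer entries and so get the explicit constant $\beta = 1$.
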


\begin{proof}
Fix a column index $j \in \{1,\ldots,N\}$ and define $\mathcal{S}_0 \triangleq \{j\}$. Then, for $h=1,\ldots,N-1$, the following can be written as
\begin{align*}
    \mathcal{S}_h \triangleq \mathcal{S}_{h-1} \cup \bigcup_{v\in \mathcal{S}_{h-1}} \mathcal{N}_v^{\mathcal{G}_h},
\end{align*}
where $\mathcal{N}_v^{\mathcal{G}_h}$ denotes the neighborhood of $v$ in $\mathcal{G}_h$. We first show by induction on $h$ that
\begin{align*}
i \in \mathcal{S}_h \quad\Longrightarrow\quad \big[\mathbf{C}_h \mathbf{C}_{h-1}\cdots \mathbf{C}_1\big]_{ij} > 0.
\end{align*}

For $h=0$, the claim is immediate since $\mathcal{S}_0=\{j\}$. Assume it holds for some $h-1\ge 0$ and let $i\in \mathcal{S}_h$. If $i\in \mathcal{S}_{h-1}$, then $(\mathbf{C}_h)_{ii}=1$, and therefore
\begin{align*}
\big[\mathbf{C}_h\cdots\mathbf{C}_1\big]_{ij} \ge (\mathbf{C}_h)_{ii} \big[\mathbf{C}_{h-1}\cdots\mathbf{C}_1\big]_{ij} >0.
\end{align*}

Otherwise, $i\notin \mathcal{S}_{h-1}$ and, by construction of $\mathcal{S}_h$, there exists a vertex $v\in \mathcal{S}_{h-1}$ such that $(i,v)\in \mathcal{E}(\mathcal{G}_h)$. Hence, $(\mathbf{C}_h)_{iv}=1$, and by the induction hypothesis, $\big[\mathbf{C}_{h-1}\cdots\mathbf{C}_1\big]_{vj}>0$. Consequently,
\begin{align*}
\big[\mathbf{C}_h\cdots\mathbf{C}_1\big]_{ij} \ge (\mathbf{C}_h)_{iv} \big[\mathbf{C}_{h-1}\cdots\mathbf{C}_1\big]_{vj} >0.
\end{align*}

Next, suppose that $\mathcal{S}_{h-1}\neq \{1,\ldots,N\}$. Since $\mathcal{G}_h$ is connected, there must exist an edge between $\mathcal{S}_{h-1}$ and its complement. Therefore, at least one new vertex is added to $\mathcal{S}_h$, implying
\begin{align*}
|\mathcal{S}_h| \ge |\mathcal{S}_{h-1}|+1.
\end{align*}
As $|\mathcal{S}_0|=1$, it follows that
\begin{align*}
|\mathcal{S}_{N-1}| \ge 1+(N-1) = N.
\end{align*}
So, $\mathcal{S}_{N-1}=\{1,\ldots,N\}$, and thus $\big[\mathbf{C}_{N-1}\cdots\mathbf{C}_1\big]_{ij}>0$ for every $i$.

Because the choice of $j$ is arbitrary, every entry of $\mathbf{C}_{N-1}\cdots\mathbf{C}_1$ is strictly positive. Moreover, for fixed $N$, there are only finitely many connected simple graphs on $\{1,\ldots,N\}$, and hence only finitely many admissible sequences $\big(\mathcal{G}_1,\ldots,\mathcal{G}_{N-1}\big)$. Therefore,
\begin{align*}
\beta \triangleq \min_{i,j,\{\mathcal{G}_h\}} \big[\mathbf{C}_{N-1}\cdots\mathbf{C}_1\big]_{ij},
\end{align*}
is well defined and satisfies $\beta>0$. As a result, $\mathbf{C}_{N-1}\cdots\mathbf{C}_1 \ge \beta\,\mathbf{1}\mathbf{1}^\top$.
\end{proof}


\begin{theorem} \label{thm:convergence}
Consider the consensus protocol in Eq.~\eqref{consensus_law}. For an arbitrary
$\mathbf{X}_i(0) \in \mathbb{S}^n_{++}$,
\begin{align*}
    \lim_{\ell\to\infty} \mathbf{X}_i(\ell) = \mathbf{X}^\star, \quad i=1,\ldots,N,
\end{align*}
for some common $\mathbf{X}^\star \in \mathbb{S}^n_{++}$, with $\mathbf{X}^\star = \sum_{j=1}^N c_j \mathbf{X}_j(0)$, $c_j > 0$, $\sum_j c_j = 1$.
\end{theorem}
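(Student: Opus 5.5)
We will show that the backward products $\boldsymbol{\Phi}(\ell,0)$ converge to a rank-one matrix $\mathbf{1}\mathbf{c}^\top$ with $\mathbf{c}>\mathbf{0}$ entrywise and $\mathbf{1}^\top\mathbf{c}=1$. The claim then follows from Eq.~\eqref{collective2}, since $\mathbf{X}_i(\ell)=\sum_j [\boldsymbol{\Phi}(\ell,0)]_{ij}\mathbf{X}_j(0)\to\sum_j c_j\mathbf{X}_j(0)\triangleq\mathbf{X}^\star$. Positive definiteness of $\mathbf{X}^\star$ is then immediate, by the same argument as in Theorem~\ref{thm:positive-def}: it is a convex combination with strictly positive coefficients of matrices in $\mathbb{S}^n_{++}$. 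Thus the whole task reduces to a scalar ergodicity statement for the row-stochastic sequence $\{\mathbf{W}(\ell)\}$. It suffices to prove it column by column, i.e.\ for scalar initial vectors $\mathbf{d}(0)\in\mathbb{R}^N$ with $\mathbf{d}(\ell)=\boldsymbol{\Phi}(\ell,0)\mathbf{d}(0)$, using the oscillation $\operatorname{osc}(\cdot)$.

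\textbf{Step 1 (scrambling block).} Set $T\triangleq(N-1)B$ and split $\boldsymbol{\Phi}(\ell+T,\ell)$ into $N-1$ consecutive blocks of length $B$. By Lemma~\ref{lem:block-bound}, each block satisfies $\boldsymbol{\Phi}\geq\gamma(\mathbf{A}_{\mathrm u}+\mathbf{I}_N)=\gamma\mathbf{C}_h$. By Assumption~\ref{ass:jc}, each union graph is connected. Since entrywise inequalities between nonnegative matrices are preserved under products, we get $\boldsymbol{\Phi}(\ell+T,\ell)\geq\gamma^{N-1}\mathbf{C}_{N-1}\cdots\mathbf{C}_1$. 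Lemma~\ref{lem:reachability} then gives
\begin{align*}
\boldsymbol{\Phi}(\ell+T,\ell)\geq\delta\,\mathbf{1}\mathbf{1}^\top,\qquad \delta\triangleq\gamma^{N-1}\beta>0,
\end{align*}
uniformly in $\ell$. Two points need care here. First, the blocks must be aligned with the ordering of the product: $\boldsymbol{\Phi}$ multiplies later matrices on the left, which matches $\mathbf{C}_{N-1}\cdots\mathbf{C}_1$ with $h=1$ the earliest block. Second, the graph sequence indexed by $\ell$ must be identified with the one in Assumption~\ref{ass:jc}. Lemma~\ref{lem:uniform-weight} already treats the weights as uniformly bounded in $\ell$, so I will adopt the same convention.

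\textbf{Step 2 (contraction).} For a row-stochastic $\mathbf{S}\geq\delta\mathbf{1}\mathbf{1}^\top$, the standard estimate is $\operatorname{osc}(\mathbf{S}\mathbf{d})\leq(1-N\delta)\operatorname{osc}(\mathbf{d})$. It comes from writing $(\mathbf{S}\mathbf{d})_i-(\mathbf{S}\mathbf{d})_k=\sum_j(s_{ij}-s_{kj})d_j$ and bounding by $\tfrac12\sum_j|s_{ij}-s_{kj}|\operatorname{osc}(\mathbf{d})\leq(1-\sum_j\min(s_{ij},s_{kj}))\operatorname{osc}(\mathbf{d})$. Moreover, each single $\mathbf{W}$ is row-stochastic, so $\max_i d_i(\ell)$ is nonincreasing and $\min_i d_i(\ell)$ is nondecreasing. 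Hence $\operatorname{osc}(\mathbf{d}(\ell))\leq(1-N\delta)^{\lfloor \ell/T\rfloor}\operatorname{osc}(\mathbf{d}(0))$, which is a geometric rate. The monotone bounded sequences $\max_i d_i$ and $\min_i d_i$ therefore converge to a common limit. Applying this to $\mathbf{d}(0)=\mathbf{e}_j$ shows that column $j$ of $\boldsymbol{\Phi}(\ell,0)$ converges to $c_j\mathbf{1}$, so $\boldsymbol{\Phi}(\ell,0)\to\mathbf{1}\mathbf{c}^\top$. Row-stochasticity passes to the limit, giving $\sum_j c_j=1$.

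\textbf{Step 3 (strict positivity of $c_j$).} This is the main obstacle, because a general limit only gives $c_j\geq0$. For $\ell\geq T$, Step 1 gives $[\boldsymbol{\Phi}(T,0)]_{ij}\geq\delta$. Each later product $\boldsymbol{\Phi}(\ell,T)$ is row-stochastic, so
\begin{align*}
[\boldsymbol{\Phi}(\ell,0)]_{ij}=\sum_k[\boldsymbol{\Phi}(\ell,T)]_{ik}[\boldsymbol{\Phi}(T,0)]_{kj}\geq\delta .
\end{align*}
Passing to the limit yields $c_j\geq\delta>0$, which completes the argument. One residual subtlety is that $\mathbf{W}(\ell)$ depends on the covariances, making the system nonlinear. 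The argument nonetheless goes through, because it only uses the uniform bounds of Lemma~\ref{lem:uniform-weight}, which hold along the actual trajectory, and $\mathbf{c}$ is then trajectory-dependent, which the statement allows.
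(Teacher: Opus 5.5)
Your proposal is correct and follows essentially the same route as the paper. Both arguments rest on the uniform scrambling bound $\boldsymbol{\Phi}(\ell+(N-1)B,\ell)\geq\delta\,\mathbf{1}\mathbf{1}^\top$ from Lemmas~\ref{lem:block-bound} and~\ref{lem:reachability}, on blockwise contraction of the oscillation of each column of $\boldsymbol{\Phi}(\ell,0)$ with monotone column maxima and minima, and on $c_j\geq\delta$ from the first block. The only differences are cosmetic: you prove the oscillation contraction inline, which gives the slightly sharper factor $1-N\delta$, instead of invoking submultiplicativity of the Dobrushin coefficient, and you obtain $c_j\geq\delta$ by factoring through $\boldsymbol{\Phi}((N-1)B,0)$ rather than from monotonicity of $b_j(\ell)$.
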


\begin{proof}
Since $\boldsymbol{\Phi}(\ell_2,\ell_1)$ is a product of consecutive matrices, it satisfies $\boldsymbol{\Phi}(\ell_3,\ell_1) = \boldsymbol{\Phi}(\ell_3,\ell_2) \, \boldsymbol{\Phi}(\ell_2,\ell_1)$ for any $\ell_1 \leq \ell_2 \leq \ell_3$. Hence, the product $\boldsymbol{\Phi}\big(\ell+(N-1)B,\ell\big)$ decomposes into $N-1$ consecutive length-$B$ blocks,
\begin{align}
    \boldsymbol{\Phi}\big(\ell+(N-1)B,\ell\big)
    &= \boldsymbol{\Phi}\big(\ell+(N-1)B,\ell+(N-2)B\big) \cdots \boldsymbol{\Phi}\big(\ell+B,\ell\big). \label{block-decomp}
\end{align}
Applying Lemma~\ref{lem:block-bound} to each of the $N-1$ factors gives, for
$h=0,\ldots,N-2$,
\begin{align*}
    \boldsymbol{\Phi}\big(\ell+(h+1)B,\ell+hB\big) \ \ge\ \gamma\big(\mathbf{A}_{\mathrm{u}}^{(\ell+hB)}+\mathbf{I}_N\big) \ \ge\ 0.
\end{align*}

Since all matrices involved are nonnegative, if $\mathbf{0}\le \mathbf{Z}_h \le \mathbf{V}_h$ for $h=1,\ldots,p$, then $\mathbf{Z}_1\cdots\mathbf{Z}_p \le \mathbf{V}_1\cdots\mathbf{V}_p$. Each entry of $\mathbf{V}_1\cdots\mathbf{V}_p$ is a sum of nonnegative terms, and each such term bounds the corresponding term of $\mathbf{Z}_1\cdots\mathbf{Z}_p$ from above. Applying this bound to each of the $N-1$ factors in Eq.~\eqref{block-decomp} and chaining them together gives, for every $\ell\ge0$,
\begin{align}
    \boldsymbol{\Phi} \big(\ell+(N-1)B,\ell\big) \ &\ge\ \gamma^{N-1} \big(\mathbf{A}_{\mathrm{u}}^{(\ell+(N-2)B)}+\mathbf{I}_N\big) \cdots \big(\mathbf{A}_{\mathrm{u}}^{(\ell)}+\mathbf{I}_N\big). \label{block-product}
\end{align}

By Assumption~\ref{ass:jc}, each union graph described as
\begin{align*}
\bigcup_{t=\ell+hB}^{\ell+(h+1)B-1}\mathcal{G}(t), \quad h=0,\ldots,N-2,
\end{align*}
is connected. Applying Lemma~\ref{lem:reachability} to the sequence of union graphs leads to
\begin{align}
\boldsymbol{\Phi} \big(\ell+(N-1)B,\ell\big) \ge \delta\,\mathbf{1}\mathbf{1}^{\top}, \quad \delta \triangleq \gamma^{N-1}\beta >0 .
\label{scrambling-bound}
\end{align}

Let $\tau(\mathbf{D})$ denote the Dobrushin ergodicity coefficient of a row-stochastic matrix $\mathbf{D}$, which is defined as
\begin{align*}
\tau(\mathbf{D}) = 1-\min_{i,r} \sum_{j=1}^{N} \min \left([\mathbf{D}]_{ij},[\mathbf{D}]_{rj} \right).
\end{align*}
The Dobrushin ergodicity coefficient satisfies $\tau(\mathbf{D}_1\mathbf{D}_2) \le\tau(\mathbf{D}_1)\tau(\mathbf{D}_2)$ and $\operatorname{osc}(\mathbf{D}\mathbf{d}) \le \tau(\mathbf{D})\operatorname{osc}(\mathbf{d})$ \cite{seneta1981non}. Since every entry of $\boldsymbol{\Phi} \big(\ell+(N-1)B,\ell \big)$ is bounded below by $\delta$ from Eq.~\eqref{scrambling-bound}, we get
\begin{align*}
\tau \left(\boldsymbol{\Phi} \big(\ell+(N-1)B,\ell\big)\right) \le 1-\delta .
\end{align*}

Consider $\mathbf{\Psi}_g \triangleq \boldsymbol{\Phi} \big( (g+1)(N-1)B,\, g(N-1)B \big)$. Using the submultiplicative property of $\tau$, it follows that
\begin{align}
\tau \left(\boldsymbol{\Phi}\big(G(N-1)B,0\big)\right) &= \tau \left(\mathbf{\Psi}_{G-1}\cdots\mathbf{\Psi}_0\right) \nonumber\\
&\le \prod_{g=0}^{G-1}\tau \left(\mathbf{\Psi}_g\right) \nonumber\\
&\le (1-\delta)^G \rightarrow 0.
\label{tau-convergence}
\end{align}

Equation~\eqref{tau-convergence} establishes that $\tau\big(\boldsymbol{\Phi}(\ell,0)\big) \to 0$ along the subsequence $\ell = G(N-1)B$ only. Extending the bound to arbitrary $\ell$ requires only that $\tau\big(\boldsymbol{\Phi}(\ell,0)\big)$ be non-increasing in $\ell$. Since $\boldsymbol{\Phi}(\ell+1,0) = \mathbf{W}(\ell) \, \boldsymbol{\Phi}(\ell,0)$ and $\tau(\mathbf{D})\le 1$ for every row-stochastic matrix $\mathbf{D}$, submultiplicativity of $\tau$ provides
\begin{align*}
    \tau\big(\boldsymbol{\Phi}(\ell+1,0)\big) &= \tau\big(\mathbf{W}(\ell)\boldsymbol{\Phi}(\ell,0)\big) \le \tau\big(\mathbf{W}(\ell)\big)\,\tau\big(\boldsymbol{\Phi}(\ell,0)\big) \le \tau\big(\boldsymbol{\Phi}(\ell,0)\big).
\end{align*}
Consequently, for any $\ell\ge0$, letting $G_\ell \triangleq \lfloor \ell/((N-1)B)\rfloor$
so that $G_\ell(N-1)B \le \ell$, we obtain
\begin{align}
    \tau\big(\boldsymbol{\Phi}(\ell,0)\big) \ \le\ \tau\big(\boldsymbol{\Phi}(G_\ell(N-1)B,0)\big) \ \le\ (1-\delta)^{G_\ell}, \label{tau-full}
\end{align}
and since $G_\ell \to \infty$ as $\ell \to \infty$, Eq.~\eqref{tau-full} gives $\tau\big(\boldsymbol{\Phi}(\ell,0)\big) \to 0$ as $\ell \to \infty$.

Fix a column index $j$ and consider the following
\begin{align*}
a_j(\ell) = \max_i \left[\boldsymbol{\Phi}(\ell,0)\right]_{ij}, \quad b_j(\ell) = \min_i \left[\boldsymbol{\Phi}(\ell,0)\right]_{ij}.
\end{align*}
Because each row of $\boldsymbol{\Phi}(\ell+1,0)$ is a convex combination of the rows of $\boldsymbol{\Phi}(\ell,0)$, the sequence $a_j(\ell)$ is non-increasing, while $b_j(\ell)$ is non-decreasing. Taking $\mathbf{d}=\mathbf{e}_j$ in the oscillation inequality, where $\mathbf{e}_j \in \mathbb{R}^N$ denotes the $j$-th canonical basis vector, we get
\begin{align*}
\operatorname{osc} \big(\boldsymbol{\Phi}(\ell,0) \, \mathbf{e}_j\big) \le \tau\big(\boldsymbol{\Phi}(\ell,0)\big) \operatorname{osc}(\mathbf{e}_j) = \tau\big(\boldsymbol{\Phi}(\ell,0)\big).
\end{align*}
As $\boldsymbol{\Phi}(\ell,0) \, \mathbf{e}_j$ is the $j$-th column of $\boldsymbol{\Phi}(\ell,0)$, it follows that
\begin{align*}
\operatorname{osc} \big(\boldsymbol{\Phi}(\ell,0) \, \mathbf{e}_j\big) = a_j(\ell)-b_j(\ell),
\end{align*}
and therefore
\begin{align*}
a_j(\ell)-b_j(\ell) \le \tau \big(\boldsymbol{\Phi}(\ell,0)\big).
\end{align*}

Combining this with Eq.~\eqref{tau-full} shows that $a_j(\ell)-b_j(\ell)\rightarrow 0$. Since both sequences are bounded and monotone, there exists $c_j\in[0,1]$ such that $a_j(\ell)\rightarrow c_j$ and $b_j(\ell)\rightarrow c_j$. Hence,
\begin{align*}
\left[\boldsymbol{\Phi}(\ell,0)\right]_{ij} \rightarrow c_j, \quad \forall\, i,j,
\end{align*}
which implies that
\begin{align*}
\boldsymbol{\Phi}(\ell,0) \rightarrow \mathbf{1}\mathbf{c}^{\top},
\end{align*}
where $\mathbf{c}^{\top}$ denotes a row vector. Since every $\boldsymbol{\Phi}(\ell,0)$ is row-stochastic, it follows that $c_j\ge 0$ and $\sum_{j=1}^{N} c_j = 1$. Moreover, from Eq.~\eqref{scrambling-bound} with $\ell=0$, every entry of $\boldsymbol{\Phi}\big((N-1)B,0\big)$ is at least $\delta$. Thus, $b_j\big((N-1)B\big)\ge \delta$, and as $b_j(\ell)$ is non-decreasing, the following can be represented by
\begin{align*}
c_j = \lim_{\ell\to\infty} b_j(\ell) \ge \delta >0, \quad j=1,\ldots,N.
\end{align*}

Finally, from Eq.~\eqref{collective2}, $\mathbf{s}(\ell) = \big( \boldsymbol{\Phi}(\ell,0) \otimes \mathbf{I}_{n^2} \big) \, \mathbf{s}(0)$, and hence the following can be written as
\begin{align*}
\lim_{\ell\to\infty} \mathbf{s}(\ell) = \big( \mathbf{1}\mathbf{c}^{\top} \otimes \mathbf{I}_{n^2} \big) \, \mathbf{s}(0).
\end{align*}
Equivalently, the limit of each information matrix can be expressed as
\begin{align*}
\lim_{\ell\to\infty} \mathbf{X}_i(\ell) = \sum_{j=1}^{N} c_j\,\mathbf{X}_j(0) \triangleq \mathbf{X}^{\star}, \quad i=1,\ldots,N.
\end{align*}

Since $\mathbf{X}_j(0)\in\mathbb{S}_{++}^{n}$ for all $j$ and $c_j>0$ with $\sum_{j=1}^{N} c_j=1$, the matrix $\mathbf{X}^{\star}$ is a convex combination of positive definite matrices and therefore $\mathbf{X}^{\star}\in\mathbb{S}_{++}^{n}$.
\end{proof}


\begin{corollary}
Under the conditions of Theorem~2, for every $i = 1, \dots, N$ and every $\ell \geq 0$,
\begin{align*}
\|\mathbf{X}_i(\ell) - \mathbf{X}^\star\|_F \; \leq \; (1-\delta)^{G_\ell} \sum_{j=1}^{N} \|\mathbf{X}_j(0)\|_F,
\end{align*}
where $G_\ell$ and $\delta = \gamma^{N-1}\beta > 0$ are as defined in the proof of Theorem~2. Consequently, $\mathbf{X}_i(\ell)$ converges to $\mathbf{X}^\star$ at a geometric rate.
\end{corollary}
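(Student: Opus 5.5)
We write the error directly in terms of the transition matrix and then bound it using the ergodicity estimate already obtained in the proof of Theorem~\ref{thm:convergence}. From Eq.~\eqref{collective2}, $\mathbf{X}_i(\ell) = \sum_{j=1}^N [\boldsymbol{\Phi}(\ell,0)]_{ij}\mathbf{X}_j(0)$. Also $\mathbf{X}^\star = \sum_j c_j \mathbf{X}_j(0)$. Hence
\begin{align*}
\mathbf{X}_i(\ell)-\mathbf{X}^\star = \sum_{j=1}^N \big([\boldsymbol{\Phi}(\ell,0)]_{ij} - c_j\big)\mathbf{X}_j(0).
\end{align*}
The triangle inequality for $\|\cdot\|_F$ then gives
\begin{align*}
\|\mathbf{X}_i(\ell)-\mathbf{X}^\star\|_F \le \max_j \big|[\boldsymbol{\Phi}(\ell,0)]_{ij}-c_j\big| \sum_{j=1}^N \|\mathbf{X}_j(0)\|_F.
\end{align*}
It therefore suffices to show $\big|[\boldsymbol{\Phi}(\ell,0)]_{ij}-c_j\big| \le (1-\delta)^{G_\ell}$ for all $i,j$.

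For this entrywise bound we use the monotone envelopes $a_j(\ell)$ and $b_j(\ell)$ from the proof of Theorem~\ref{thm:convergence}.
\begin{itemize}
\item $a_j(\ell)$ is non-increasing and $b_j(\ell)$ is non-decreasing, and both converge to $c_j$. Hence $b_j(\ell)\le c_j\le a_j(\ell)$ for every $\ell$.
\item Trivially $b_j(\ell)\le[\boldsymbol{\Phi}(\ell,0)]_{ij}\le a_j(\ell)$.
\item Together these give $\big|[\boldsymbol{\Phi}(\ell,0)]_{ij}-c_j\big|\le a_j(\ell)-b_j(\ell)$.
\item The oscillation inequality with $\mathbf{d}=\mathbf{e}_j$ gives $a_j(\ell)-b_j(\ell)\le\tau(\boldsymbol{\Phi}(\ell,0))$.
\item Eq.~\eqref{tau-full} gives $\tau(\boldsymbol{\Phi}(\ell,0))\le(1-\delta)^{G_\ell}$.
\end{itemize}
Chaining these yields the claimed entrywise bound, and substituting it above gives the stated inequality.

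For the geometric rate, note that $G_\ell \ge \ell/((N-1)B) - 1$. Hence $(1-\delta)^{G_\ell} \le (1-\delta)^{-1}\big((1-\delta)^{1/((N-1)B)}\big)^{\ell}$, which is geometric in $\ell$ provided $0<\delta<1$. The bound $\delta\le 1$ holds automatically: every row of $\boldsymbol{\Phi}$ sums to one, so entries cannot all be at least $\delta>1/N$, giving $\delta\le 1/N$. If $\delta = 1$ the estimate is trivial.

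The only step needing care is the sandwich $b_j(\ell)\le c_j\le a_j(\ell)$, which relies on the monotonicity established in Theorem~\ref{thm:convergence}. Once it is in place, the rest is bookkeeping with previously established inequalities.
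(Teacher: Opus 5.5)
Your proof is correct and follows the paper's argument essentially step for step. You use the same decomposition $\mathbf{X}_i(\ell)-\mathbf{X}^\star=\sum_j\big([\boldsymbol{\Phi}(\ell,0)]_{ij}-c_j\big)\mathbf{X}_j(0)$, the same sandwich $b_j(\ell)\le[\boldsymbol{\Phi}(\ell,0)]_{ij},\,c_j\le a_j(\ell)$, the oscillation bound, and Eq.~\eqref{tau-full}; your explicit check that the rate is geometric in $\ell$ (via $G_\ell\ge \ell/((N-1)B)-1$ and $\delta\le 1/N$) is a valid detail the paper leaves implicit.
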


\begin{proof}
From Eq.~\eqref{collective2}, $\operatorname{vec}\big(\mathbf{X}_i(\ell)\big) - \operatorname{vec}\big(\mathbf{X}^\star\big) = \sum_{j=1}^N \left(\big[\boldsymbol{\Phi}(\ell,0)\big]_{ij} - c_j\right)\operatorname{vec}\big(\mathbf{X}_j(0)\big)$. Since $a_j(\ell)$ is non-increasing to $c_j$ and $b_j(\ell)$ is non-decreasing to $c_j$, with $b_j(\ell) \le \big[\boldsymbol{\Phi}(\ell,0)\big]_{ij} \le a_j(\ell)$ for every $i$, it follows that $|[\boldsymbol{\Phi}(\ell,0)]_{ij} - c_j| \le a_j(\ell) - b_j(\ell) \le \tau\big(\boldsymbol{\Phi}(\ell,0)\big)$. Applying the triangle inequality and Eq.~\eqref{tau-full} then gives
\begin{align*}
\|\mathbf{X}_i(\ell) - \mathbf{X}^\star\|_F &\; \le \; \tau\big(\boldsymbol{\Phi}(\ell,0)\big) \sum_{j=1}^N \|\mathbf{X}_j(0)\|_F\\
&\; \le \;(1-\delta)^{G_\ell} \sum_{j=1}^N \|\mathbf{X}_j(0)\|_F.
\end{align*}
Hence, $\mathbf{X}_i(\ell)$ converges to $\mathbf{X}^\star$ at a geometric rate.
\end{proof}

\begin{remark}
The bound in Corollary~1 is geometric in the iteration index $\ell$, with effective decay factor $(1-\delta)$ per block of $(N-1)B$ iterations. Since $\delta = \gamma^{N-1}\beta$ with $\gamma = \alpha(\alpha^2/2)^{B-1}$, the guaranteed rate degrades as either the joint connectivity window $B$ or the network size $N$ increases.
\end{remark}

\begin{remark}
    Theorem~\ref{thm:convergence} addresses only the agreement among agents and does not establish whether the fused estimates remain consistent, which must be addressed separately.
\end{remark}


\subsection{Consistency Analysis}

The following result shows that the consensus step in Eq.~\eqref{consensus_law} preserves consistency at every iteration. Let $\mathbf{X}_i(\ell) \boldsymbol{\zeta}_i(\ell)$ denote the information vector associated with the information matrix $\mathbf{X}_i(\ell)$, where $\boldsymbol{\zeta}_i(0) = \bar{\mathbf{x}}_i$ and $\widetilde{\mathbf{x}}_i = \boldsymbol{\zeta}_i(M)$.

\begin{theorem} \label{thm:consistency}
Suppose that the pair $\big(\boldsymbol{\zeta}_i(\ell), \mathbf{X}_i(\ell)^{-1}\big)$ is consistent, i.e., $\mathbf{X}_i(\ell)^{-1} \succeq \acute{\mathbf{P}}_i(\ell)$, for all $i=1,\ldots,N$, where
\begin{align*}
    \acute{\mathbf{P}}_i(\ell) \triangleq \mathbb{E}\left[\big(\mathbf{x}-\boldsymbol{\zeta}_i(\ell)\big)\big(\mathbf{x}-\boldsymbol{\zeta}_i(\ell)\big)^\top\right],
\end{align*}
is the true error covariance of $\boldsymbol{\zeta}_i(\ell)$. The cross-covariances $\mathbb{E}\left[\big(\mathbf{x}-\boldsymbol{\zeta}_i(\ell)\big) \big(\mathbf{x}-\boldsymbol{\zeta}_j(\ell)\big)^\top\right]$, $i\ne j$, are not assumed to be known. Then, for every $i=1,\ldots,N$ and weights $w_{ij}(\ell)$ as defined in Eq.~\eqref{weights}, the pair $\big(\boldsymbol{\zeta}_i(\ell+1), \mathbf{X}_i(\ell+1)^{-1}\big)$ is also consistent.
\end{theorem}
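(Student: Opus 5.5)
The plan is the standard covariance-intersection argument. Theorem~\ref{thm:positive-def} shows that one consensus step is a convex combination of information matrices and information vectors with positive weights. From Eqs.~\eqref{consensus_law} and \eqref{sumweights}, and writing $\mathcal{J}_i=\mathcal{J}_i(\ell)$ and $w_{ij}=w_{ij}(\ell)$, the update reads
\begin{align*}
\mathbf{X}_i(\ell+1)=\sum_{j\in\mathcal{J}_i} w_{ij}\mathbf{X}_j(\ell),\qquad
\mathbf{X}_i(\ell+1)\boldsymbol{\zeta}_i(\ell+1)=\sum_{j\in\mathcal{J}_i} w_{ij}\mathbf{X}_j(\ell)\boldsymbol{\zeta}_j(\ell).
\end{align*}
The weights satisfy $w_{ij}>0$ and $\sum_{j\in\mathcal{J}_i}w_{ij}=1$. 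They depend only on the reported covariances, through traces, and not on the random estimation errors, so they are treated as deterministic under the expectation. By Theorem~\ref{thm:positive-def}, $\mathbf{X}_i(\ell+1)\in\mathbb{S}^n_{++}$, so its inverse exists.

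First I would write the new error as a combination of the old errors. Define $\mathbf{e}_j\triangleq\mathbf{x}-\boldsymbol{\zeta}_j(\ell)$. Since $\sum_j w_{ij}\mathbf{X}_j(\ell)\mathbf{x}=\mathbf{X}_i(\ell+1)\mathbf{x}$, we get
\begin{align*}
\mathbf{x}-\boldsymbol{\zeta}_i(\ell+1)=\mathbf{X}_i(\ell+1)^{-1}\sum_{j\in\mathcal{J}_i} w_{ij}\,\mathbf{a}_j,\qquad \mathbf{a}_j\triangleq\mathbf{X}_j(\ell)\mathbf{e}_j.
\end{align*}
Hence
\begin{align*}
\acute{\mathbf{P}}_i(\ell+1)=\mathbf{X}_i(\ell+1)^{-1}\,\mathbb{E}\Big[\Big(\sum_j w_{ij}\mathbf{a}_j\Big)\Big(\sum_j w_{ij}\mathbf{a}_j\Big)^{\top}\Big]\,\mathbf{X}_i(\ell+1)^{-1}.
\end{align*}
This expectation contains the unknown cross-covariances $\mathbb{E}[\mathbf{e}_j\mathbf{e}_r^\top]$, so they must be removed.

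The key step, and the only real obstacle, is to bound the middle term without those cross terms. I would use a matrix Jensen inequality that holds pointwise for every realization. Let $\bar{\mathbf{a}}=\sum_j w_{ij}\mathbf{a}_j$. Because $\sum_j w_{ij}=1$,
\begin{align*}
\sum_j w_{ij}\mathbf{a}_j\mathbf{a}_j^\top-\bar{\mathbf{a}}\bar{\mathbf{a}}^\top=\sum_j w_{ij}(\mathbf{a}_j-\bar{\mathbf{a}})(\mathbf{a}_j-\bar{\mathbf{a}})^\top\succeq\mathbf{0},
\end{align*}
so $\bar{\mathbf{a}}\bar{\mathbf{a}}^\top\preceq\sum_j w_{ij}\mathbf{a}_j\mathbf{a}_j^\top$. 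Taking expectations, which preserves $\preceq$, and then using the hypothesis $\acute{\mathbf{P}}_j(\ell)\preceq\mathbf{X}_j(\ell)^{-1}$ together with congruence by the symmetric matrix $\mathbf{X}_j(\ell)$ gives
\begin{align*}
\mathbb{E}[\bar{\mathbf{a}}\bar{\mathbf{a}}^\top]\preceq\sum_j w_{ij}\mathbf{X}_j(\ell)\acute{\mathbf{P}}_j(\ell)\mathbf{X}_j(\ell)\preceq\sum_j w_{ij}\mathbf{X}_j(\ell)=\mathbf{X}_i(\ell+1).
\end{align*}

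Finally, congruence by $\mathbf{X}_i(\ell+1)^{-1}$ preserves the ordering, so $\acute{\mathbf{P}}_i(\ell+1)\preceq\mathbf{X}_i(\ell+1)^{-1}\mathbf{X}_i(\ell+1)\mathbf{X}_i(\ell+1)^{-1}=\mathbf{X}_i(\ell+1)^{-1}$. This is exactly the consistency condition of Definition~\ref{def:consistency} for the pair $\big(\boldsymbol{\zeta}_i(\ell+1),\mathbf{X}_i(\ell+1)^{-1}\big)$, and it holds for every $i$.
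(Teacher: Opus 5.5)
Your proof is correct and takes the same route as the paper: both recognize the consensus step as covariance intersection in information form, with nonnegative weights that sum to one. The only difference is that the paper cites the CI consistency result of Julier and Uhlmann, whereas you prove it inline for any number of fused estimates via the pointwise inequality $\bar{\mathbf{a}}\bar{\mathbf{a}}^\top \preceq \sum_j w_{ij}\mathbf{a}_j\mathbf{a}_j^\top$ and two congruences. Doing so also makes explicit the assumption, used implicitly in the paper, that the weights are deterministic, i.e., independent of the estimation errors.
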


\begin{proof}
Recall the update for $\mathbf{X}_i$ from Eq.~\eqref{consensus_law}. The corresponding update for the information vector is given by
\begin{align}
\mathbf{X}_i(\ell+1) \boldsymbol{\zeta}_i(\ell+1) = \sum_{j\in\mathcal{J}_i(\ell)} w_{ij}(\ell)\mathbf{X}_j(\ell)\boldsymbol{\zeta}_j(\ell),
\end{align}
where $w_{ij}(\ell)\ge 0$ and
\begin{align}
\sum_{j\in\mathcal{J}_i(\ell)} w_{ij}(\ell)=1.
\end{align}

These equations have the same form as the Covariance Intersection (CI) fusion rule expressed in information form \cite{julier1997non}. By assumption, each pair $\big(\boldsymbol{\zeta}_j(\ell),\mathbf{X}_j(\ell)^{-1}\big)$ is consistent, i.e.,
\begin{align*}
\mathbf{X}_j(\ell)^{-1} \succeq \acute{\mathbf{P}}_j(\ell), \quad j\in\mathcal{J}_i(\ell).
\end{align*}

The consistency result of CI \cite{julier1997non} states that fusing consistent information pairs using nonnegative normalized weights preserves consistency, irrespective of the unknown cross-correlations among the estimation errors. Therefore, the following can be written as
\begin{align*}
\mathbf{X}_i(\ell+1)^{-1} \succeq \acute{\mathbf{P}}_i(\ell+1),
\end{align*}
which implies that the pair $\big(\boldsymbol{\zeta}_i(\ell+1),\mathbf{X}_i(\ell+1)^{-1}\big)$ is also consistent.
\end{proof}

\begin{remark}
    Since $\mathbf{X}_i(0)^{-1} = \mathbf{P}_i$ is consistent by the consistency property of the local filter, applying Theorem~\ref{thm:consistency} inductively over $\ell=0,1,\ldots,M-1$ shows that every $\big(\boldsymbol{\zeta}_i(\ell),\mathbf{X}_i(\ell)^{-1}\big)$ remains consistent, and in particular so does $\big(\widetilde{\mathbf{x}}_i, \widetilde{\mathbf{P}}_i\big) = \big(\boldsymbol{\zeta}_i(M), \mathbf{X}_i(M)^{-1}\big)$.
\end{remark}


\section{Simulation Results} \label{sec:sim}

In this section, the performance of the proposed Information-Weighted Distributed Kalman Filter (IWDKF) is illustrated and compared with the centralized Kalman filter (CKF) and existing distributed filters: Kalman consensus filter (KCF) \cite{olfati2009kalman}, generalized Kalman consensus filter (GKCF) \cite{kamal2011generalized}, and information-weighted consensus filter (ICF) \cite{kamal2013information}. A multi-target tracking problem is considered in which a network of $N = 10$ agents monitors an area containing $N_T = 3$ ground targets that move within a bounded region. The FOV of each agent is randomly positioned such that the monitored region is fully covered collectively by the sensor network. The target dynamics are modeled by Eq.~\eqref{sys1} with the following parameters
\begin{align}
    \mathbf{F} = \begin{bmatrix}
    1 & 0 & T & 0\\
    0 & 1 & 0 & T\\
    0 & 0 & 1 & 0\\
    0 & 0 & 0 & 1
    \end{bmatrix},\;\; \mathbf{Q} = 5\begin{bmatrix}
    \frac{T^3}{3} & 0 & \frac{T^2}{2} & 0\\
    0 & \frac{T^3}{3} & 0 & \frac{T^2}{2}\\
    \frac{T^2}{2} & 0 & T & 0\\
    0 & \frac{T^2}{2} & 0 & T
    \end{bmatrix},
\end{align}
and $\mathbf{G} = \mathbf{I}_4$, where $T = 0.01$ is the sampling interval. The state vector $\mathbf{x} \in \mathbb{R}^4$ comprises positions in the $x$ and $y$ directions and their corresponding velocities. When agent $i$ observes target $t$, its measurement model is given by Eq.~\eqref{meas1} with the following parameters
\begin{align}
    \mathbf{H}^t_i = \begin{bmatrix}
        1 & 0 & 0 & 0\\
        0 & 1 & 0 & 0
    \end{bmatrix},\;\; \mathbf{R}_i^t = \begin{bmatrix}
        25 & 0\\
        0 & 25
    \end{bmatrix}.
\end{align}
When target $t$ is outside the sensing range of agent $i$, the corresponding matrices are set to $\mathbf{H}^t_i(k) = \mathbf{0}$ and $(\mathbf{R}_i^t)^{-1} = \mathbf{0}$. The CKF with access to all sensor measurements is used as the performance benchmark. For all distributed filters, each agent communicates with its neighbors only once per time step.

Each agent independently initializes its local estimate of each target's state. The position and velocity components are uniformly sampled from $[0,100]$ and $[-5,5]$, respectively. The CKF is initialized using the average of the initial state estimates across all agents.


\subsection{Fixed Network Topology}

As a preliminary comparison, the proposed IWDKF is evaluated against existing distributed filters under a fixed communication topology, as illustrated in Fig.~\ref{fig:graph}. In this setting, the distributed filters operate with their optimal consensus step sizes throughout the simulation, representing a favorable scenario for these algorithms. This baseline case serves as a reference for the more practically relevant case of switching network topologies considered in the next subsection.

This is a sparse network with $\Delta = 2$. The consensus step size is chosen as $\epsilon = 0.65/\Delta$ for all applicable algorithms. It is assumed that only agent 2 can sense the targets directly, i.e., $\mathcal{D}^t(k) = \{2\}$ and $\mathcal{O}^t(k) = \{4,5,\ldots,10\}$, for all $k \in \mathbb{Z}^+$.

\begin{figure}
    \centering
    \includegraphics[width=0.30\textwidth]{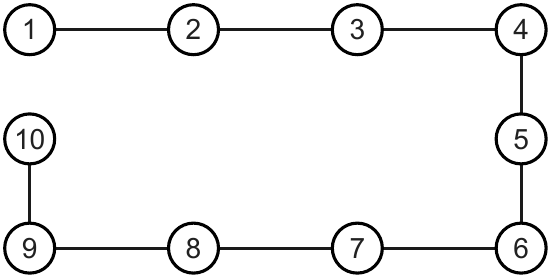}
    \caption{Communication network graph with $N=10$.}
    \label{fig:graph}
\end{figure}

Figure \ref{fig:prmse_fix} shows the position root mean square error (PRMSE) averaged over 10 agents and 3 targets based on 300 random simulation runs, for the five filters under consideration. Among the distributed filters, IWDKF achieves the lowest average PRMSE, second only to the CKF. The presence of multiple oblivious agents results in an initially high PRMSE for KCF, delaying its convergence.

Figure~\ref{fig:ae_fix} illustrates the absolute error (AE) in the posterior position estimates for each agent under IWDKF and CKF. The estimates of all IWDKF agents converge to the CKF estimate, with agents located farther from the observer requiring more time steps to converge.

These results indicate that IWDKF retains comparable estimation performance even when the competing distributed filters operate under favorable fixed-topology conditions.

\begin{figure}
    \centering
    \includegraphics[width=0.55\textwidth]{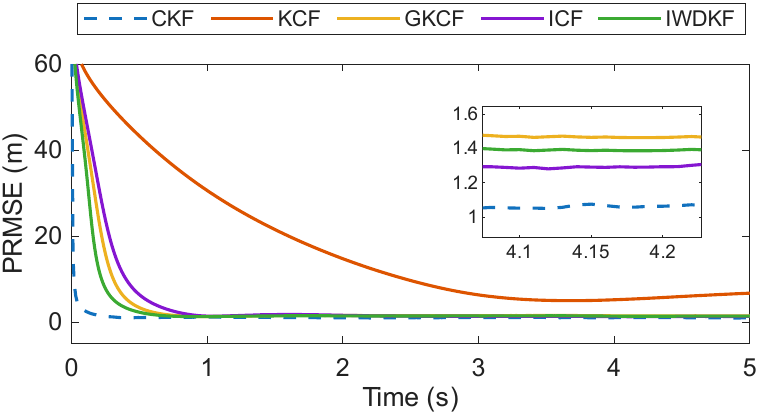}
    \caption{Position root mean square error (PRMSE) averaged over 10 agents and 3 targets based on 300 random simulation runs under a fixed network topology.}
    \label{fig:prmse_fix}
\end{figure}

\begin{figure}
    \centering
    \includegraphics[width=0.55\textwidth]{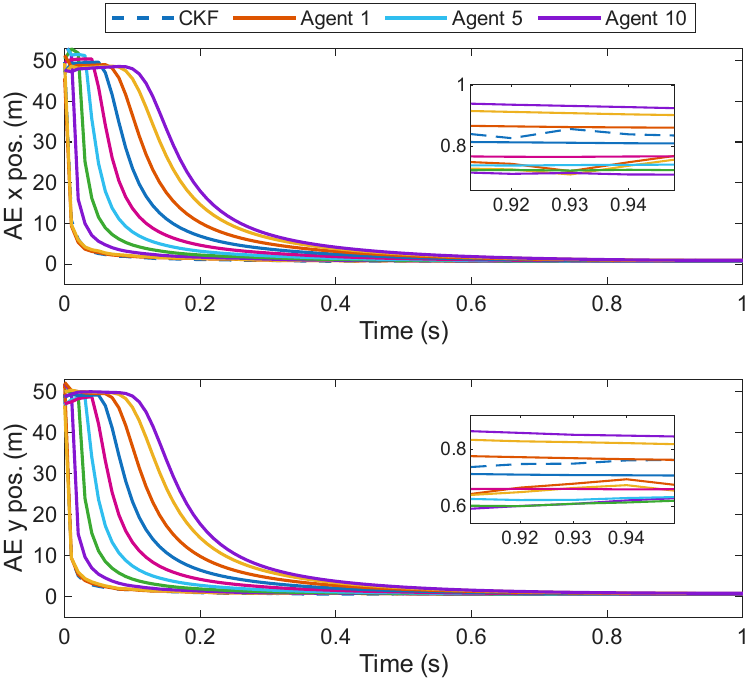}
    \caption{Absolute errors in posterior position estimates of IWDKF and CKF under a fixed network topology.}
    \label{fig:ae_fix}
\end{figure}


\subsection{Switching Network Topology}

The proposed IWDKF is now evaluated under a switching communication topology, where joint connectivity over a bounded time window is assumed, with $B = 10$. The network topology is initialized as the graph in Fig.~\ref{fig:graph} and subsequently evolves randomly over time. It is assumed that the observer set $\mathcal{D}^t(k)$ follows a fixed periodic schedule, namely $\mathcal{D}^t(k) \in \{ \{1\},\{2\},\ldots,\{10\}\}$. For all applicable algorithms, $\epsilon = 0.65/\Delta_0$, where $\Delta_0$ is the maximum degree of the initial topology. The results are obtained by averaging over 300 random simulation runs, each with a distinct switching sequence that satisfies the joint connectivity condition above. Figure~\ref{fig:graphstats} presents the number of active edges and the minimum, average, and maximum node degree over the simulation horizon for a single random simulation run.

\begin{figure}
    \centering
    \includegraphics[width=0.55\textwidth]{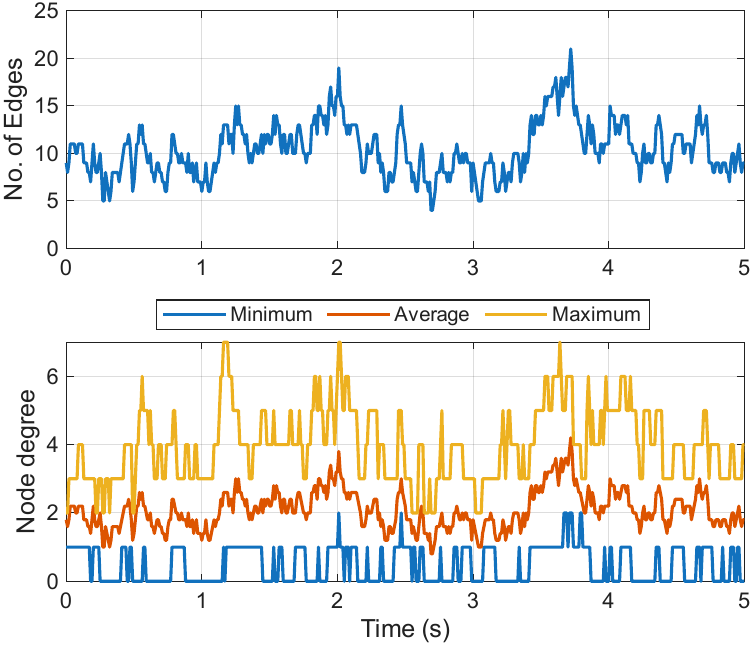}
    \caption{Evolution of graph metrics over the simulation horizon.}
    \label{fig:graphstats}
\end{figure}


Figure~\ref{fig:prmse_sw} shows the PRMSE averaged over 10 agents and 3 targets based on 300 random simulation runs, for the five filters under consideration. IWDKF achieves the lowest average PRMSE among the distributed filters, with performance close to that of the CKF. The intermittent jumps in the PRMSE of GKCF and ICF are due to the changing maximum degree of the graph. This results in the selected $\epsilon$ exceeding its upper limit $1/\Delta$ at some time instants and leading to momentary divergence of estimates. KCF is unaffected because $\epsilon$ appears only in an additive consensus term whose gain is bounded, so an outdated $\epsilon$ can only cause a bounded deviation in the estimate.

\begin{figure}
    \centering
    \includegraphics[width=0.55\textwidth]{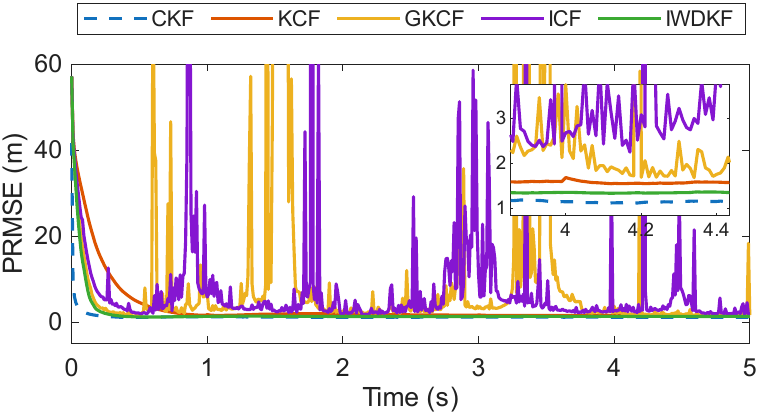}
    \caption{Position root mean square error (PRMSE) averaged over 10 agents and 3 targets based on 300 random simulation runs under a switching network topology.}
    \label{fig:prmse_sw}
\end{figure}


Figure~\ref{fig:post_err} depicts the posterior estimation error for the $x$-position of target 2 at agent 4, obtained by each filter. The $\pm 3\sigma$ bounds corresponding to each filter are also plotted as dashed lines using the same color scheme. The posterior estimation error of the proposed IWDKF remains within its $\pm 3\sigma$ bounds, whereas the posterior estimation error of ICF exceeds its $\pm 3\sigma$ bounds at multiple points.

\begin{figure}
    \centering
    \includegraphics[width=0.55\textwidth]{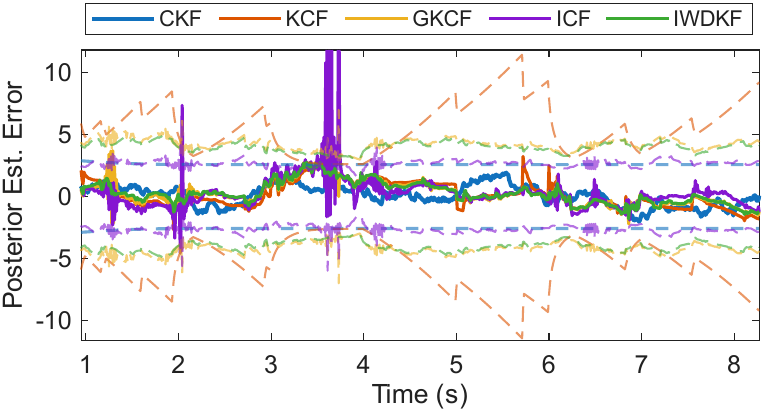}
    \caption{Posterior estimation error with $\pm 3\sigma$ bounds for the $x$-position of target 2 at agent 4 under a switching network topology.}
    \label{fig:post_err}
\end{figure}


Figure~\ref{fig:2d} illustrates the tracking behavior of two agents using the IWDKF against the CKF for target 3. Along each trajectory, markers are placed at four instants spread evenly across the latter half of the time horizon. At every marked instant, a dashed ellipse in the corresponding color depicts 95\% confidence region implied by that estimator's own covariance at that instant. The true target position falls within each ellipse at every checkpoint shown for both agents. This indicates that the covariances are not overconfident and that the IWDKF's local uncertainty estimates remain consistent.

\begin{figure}
    \centering
    \includegraphics[width=0.55\textwidth]{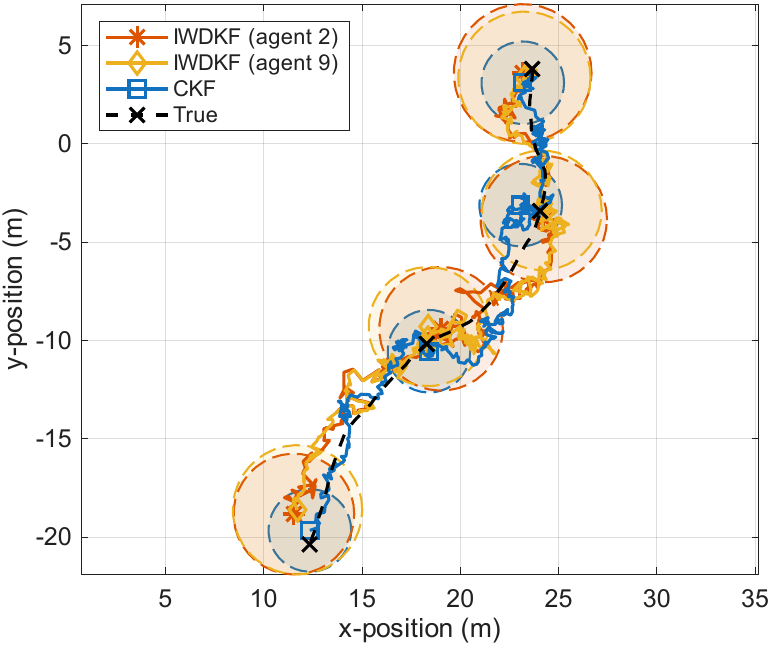}
    \caption{Trajectory of target 3 with CKF and IWDKF position estimates under a switching network topology.}
    \label{fig:2d}
\end{figure}


To assess filter consistency under the switching topology, Fig.~\ref{fig:nees} shows the normalized estimation error squared (NEES), averaged over all agents and targets, together with $95\%$ confidence bounds obtained from the $\chi^2$ distribution with $n$ degrees of freedom. The IWDKF estimates remain within the confidence bounds throughout the simulation, indicating that consistency is preserved even with the intermittent loss of connectivity.

\begin{figure}
    \centering
    \includegraphics[width=0.55\textwidth]{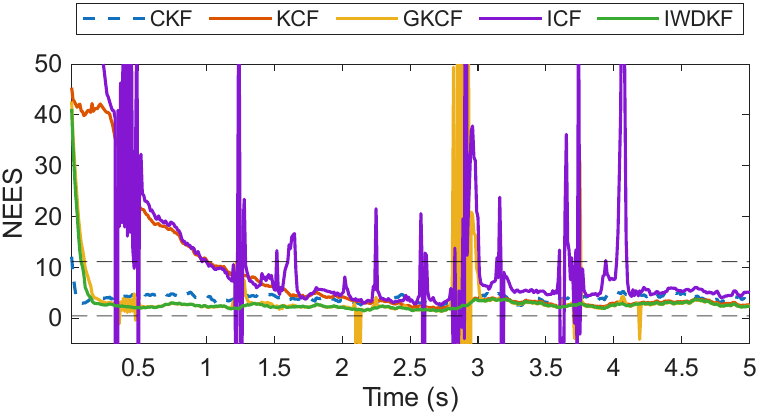}
    \caption{Normalized estimation error squared (NEES) with $95\%$ confidence bounds under a switching network topology.}
    \label{fig:nees}
\end{figure}


Figure~\ref{fig:convtime} shows the convergence time for each filter, defined as the earliest time instant at which the average PRMSE falls below a threshold of 5 m and remains below 5 m for at least 20 consecutive time steps. The IWDKF demonstrates faster convergence than the other distributed filters. Table~\ref{tab:metrics} presents a comparison of the performance metrics of the considered filters. These results demonstrate that the IWDKF retains both its estimation accuracy and its statistical consistency under jointly connected switching network topologies.


\begin{figure}
    \centering
    \includegraphics[width=0.55\textwidth]{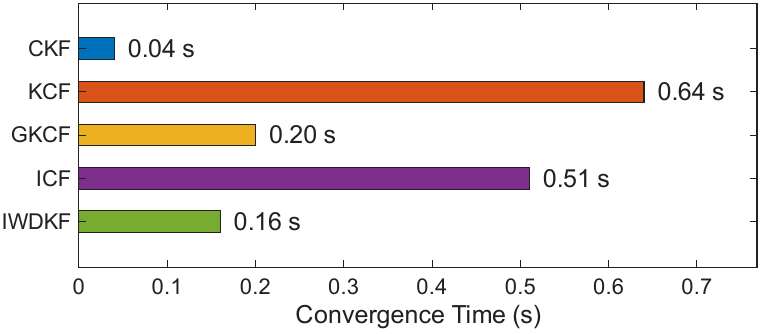}
    \caption{Convergence time comparison of the filters under a switching network topology.}
    \label{fig:convtime}
\end{figure}


\begin{table}
    \centering
    \caption{Comparison of performance metrics.}
    {\begin{tabular}{lcc} \toprule
    Name of filter & Avg. PRMSE (m) & Conv. time (s)\\ \midrule
    CKF & 1.2440 & 0.04\\
    KCF & 2.4518 & 0.64\\
    GKCF & 7.7225 & 0.20\\
    ICF & 9.1048 & 0.51\\
    IWDKF & 1.6648 & 0.16\\ \bottomrule
    \end{tabular}}
    \label{tab:metrics}
\end{table}


\subsection{Varying Numbers of Oblivious Agents}

A comparison of filter performance for varying numbers of oblivious agents is shown in Fig.~\ref{fig:oblivious}. The baseline graph in Fig.~\ref{fig:graph} is used, with a single observer agent $\mathcal{D}^t(k) = \{1\}$ and eight oblivious agents $\mathcal{O}^t(k) = \{3,4,\ldots,10\}$, for all $k \in \mathbb{Z}^+$. In each successive simulation, one additional edge is introduced between the observer and an oblivious agent, so that $|\mathcal{O}^t|$ decreases from 8 to 0. The IWDKF achieves a lower average PRMSE than ICF, GKCF, and KCF when multiple oblivious agents are present. The performance gap narrows with the decrease in the number of oblivious agents, since more agents then have access to direct or neighbor measurements.

\begin{figure}
    \centering
    \includegraphics[width=0.55\textwidth]{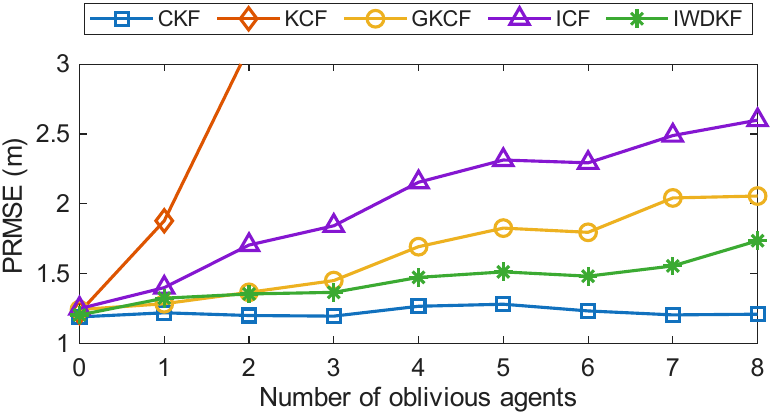}
    \caption{Comparison of position root mean square error (PRMSE) with varying number of oblivious agents.}
    \label{fig:oblivious}
\end{figure}


\subsection{Varying Numbers of Observer Agents}

Figure~\ref{fig:observer} depicts how filter performance varies with the number of observer agents. The base graph of the previous subsection is retained. In each successive simulation, one additional agent is added to the observer set, so that $|\mathcal{D}^t|$ increases from 1 to 10. With fewer observers, more agents must rely on information propagated through the network rather than on local measurements. As the number of observer agents decreases, the performance gap between IWDKF and the other distributed filters widens, even as the average PRMSE of all filters increases.

\begin{figure}
    \centering
    \includegraphics[width=0.55\textwidth]{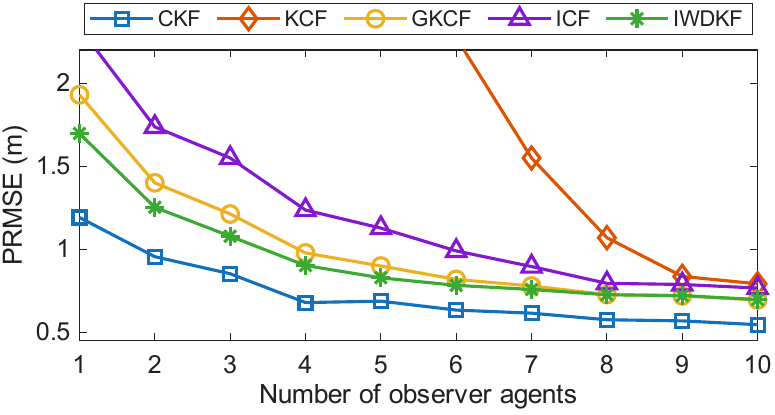}
    \caption{Comparison of position root mean square error (PRMSE) with varying number of observer agents.}
    \label{fig:observer}
\end{figure}


\section{Conclusion} \label{sec:conclusion}

This paper proposed an Information-Weighted Distributed Kalman Filter (IWDKF) for multi-target tracking using a network of sensor agents with limited sensing capabilities and switching communication topologies. The proposed approach relaxes several restrictive assumptions commonly adopted in existing distributed filtering methods, including the requirement for direct target observability, knowledge of global network parameters, and fixed communication topologies. The positive definiteness of the covariance matrix associated with the weighted-consensus estimate is established, and the consistency of the consensus estimate is shown. Furthermore, the proposed weighting strategy is shown to guarantee consensus at a geometric rate under jointly connected switching graphs. Simulation results for a multi-target tracking scenario with a network of agents demonstrated that the proposed IWDKF can accurately estimate target states despite limited local observations and dynamic network connectivity, while requiring only a single communication exchange per time step. The performance of IWDKF was also compared with that of the CKF and existing distributed filters: KCF, GKCF, and ICF, for both fixed and switching topologies. Future work can focus on extending the proposed IWDKF framework to nonlinear target dynamics and measurement models.


\bibliographystyle{ieeetr}
\bibliography{mybib}


\end{document}